\newtheorem{theorem}{Theorem}[section]
\newtheorem{lemma}[theorem]{Lemma}
\newtheorem{proposition}[theorem]{Proposition}
\newtheorem{remark}[theorem]{Remark}
\numberwithin{equation}{section}
\newcommand{\R}{{\mathbb{R}}}
\newcommand{\N}{{\mathbb{N}}}
\newcommand\numberthis{\addtocounter{equation}{1}\tag{\theequation}}
\newcommand{\norm}[1]{\left\lVert #1 \right\rVert}
\begin{document}
\title{Data-driven Stabilization of\\SISO Feedback Linearizable Systems}

\author[Lucas~Fraile]{Lucas Fraile}
\address{Department of Electrical and Computer Engineering\\ 
University of California at Los Angeles\\
Los Angeles, CA 90095-1594, USA}
\email{lfrailev@ucla.edu}

\author[Matteo~Marchi]{Matteo Marchi}
\address{Department of Electrical and Computer Engineering\\ 
University of California at Los Angeles\\
Los Angeles, CA 90095-1594, USA}
\email{matmarchi@ucla.edu}

\author[Paulo~Tabuada]{Paulo Tabuada}
\address{Department of Electrical and Computer Engineering\\ 
University of California at Los Angeles\\
Los Angeles, CA 90095-1594, USA}
\urladdr{http://www.ee.ucla.edu/$\sim$tabuada}
\email{tabuada@ee.ucla.edu}
\thanks{The first author would like to acknowledge Aaron Ames and Jessy Grizzle for their valuable input when developing the first version of these results reported in~\cite{TMGA17}.}%
\thanks{This work was supported in part by the CONIX Research Center, one of six centers in JUMP, a Semiconductor Research Corporation (SRC) program sponsored by DARPA.}

%



\begin{abstract}
In this paper we propose a methodology for stabilizing single-input single-output feedback linearizable systems when no system model is known and no prior data is available to identify a model. Conceptually, we have been greatly inspired by the work of Fliess and Join on intelligent PID controllers (e.g.,~\cite{FJ09,MFC13}) and the results in this paper provide sufficient conditions under which a modified version of their approach is guaranteed to result in asymptotically stable behavior. One of the key advantages of the proposed results is that, contrary to other approaches to controlling systems without a model (or with a partial model), such as reinforcement learning, there is no need for extensive training nor large amounts of data. Technically, our results draw heavily from the work of Nesic and co-workers on observer and controller design based on approximate models~\cite{AN04,NT04}. Along the way we also make connections with other well established results such as high-gain observers and adaptive control. Although we focus on the simple setting of single-input single-output feedback linearizable systems we believe the presented results are already theoretically insightful and practically useful, the last point being substantiated by experimental evidence.
\end{abstract}

\maketitle


%

\section{Introduction}
\subsection{Motivation}
This paper was motivated by two initially independent lines of inquiry: the thought-provoking work of Fliess and Join on intelligent PID controllers~\cite{FJ09,MFC13}, and the growing impact of machine learning, in particular deep learning, on a wide variety of engineering problems~\cite{LeCunDeepLearning,SchmidhuberDeepLearning}. Curiously, the techniques of Fliess and Join can be seen as a method to transform sensor measurement data into control inputs with minimal reliance on plant models. Therefore, we can interpret intelligent PID controllers as data-driven\footnote{The term \emph{model-free} is sometimes used in lieu of \emph{data-driven}. However, we know from behavioral systems theory that data generated by interacting with a system, i.e., its behavior, is essentially a model for such system. Therefore, we find the term data-driven more adequate as it only suggests that state-space models are not explicitly used.} controllers and this is the view espoused in this work. 

\subsection{Contribution}

The main contribution of this paper is the identification of a class\footnote{Essentially single-input single-output feedback linearizable systems, see Section~\ref{Sec:Main} for a formal statement of the main results. Note, however, that the results conceptually extend to multiple-input, multiple-output systems and even to slowly time varying systems.} of nonlinear systems for which a modified version of intelligent PID controllers can guarantee asymptotic stability. This is by no means the largest class of such systems, but a large enough class to make the technical contribution of this paper relevant to applications, as illustrated by the experimental results presented in Section~\ref{Sec:Experimental}. Moreover, the techniques used to prove the results are also of interest as they rely on an apparently unrelated line of work by Nesic and co-workers~\cite{AN04,NT04} on state estimation and control based on approximate models. In particular, we show in this paper how the results in~\cite{AN04,NT04} can be used to provide a formal justification for the working assumption upon which the analysis of Fliess and Join~\cite{FJ09,MFC13} relies: \emph{the sampling rate can be made high enough so that the relevant signals can be considered constant in between sampling instants}.

Although the use of learning techniques has been surging\footnote{As revealed, e.g., by a search using the keywords ``data-driven'' and ``control''.} within the control community, learning has always been an integral part of the scientific discipline of control. Classical bodies of work within control, such as a system identification~\cite{SI,SISub} and adaptive control~\cite{StableAdaptiveSystemsBook,RobustAdaptiveControlBook}, are essentially learning techniques tailored to the needs of control. The results in this paper make connections with, and sometimes have been inspired by, such classical results. We shall expose several of these connections throughout the paper although readers with a different background may see other connections that have eluded the authors. Yet, it matters to highlight the advantages of the results in this paper over other learning techniques for control. First, the proposed data-driven controllers require neither large amounts of data nor lengthy offline or online training. In this sense, they are much closer to adaptive control than to techniques based on reinforcement learning~\cite{RL18} or deep learning~\cite{DL16}. However, contrary to most work on adaptive control that relies on linearly parameterized models (for the plant or controller), the proposed data-driven controllers do not attempt to learn parameters and, instead, directly learn the input to be fed to the plant. Hence, we always work on small finite-dimensional spaces and, for this reason, only need small amounts of data. A further advantage of the proposed data-driven controllers is that its users only need, yet are not restricted, to employ linear control techniques, an observation that justifies the well crafted title of~\cite{FJ09}. Finally, the results in this paper should be regarded as a design methodology since its key steps can be performed by resorting to different techniques. To show feasibility of the approach, and ease of use, we propose a specific technique for each step although it should be clear these are by no means unique or even the best. We shall return to this point in more detail in Section~\ref{Sec:Methodology} where we provide an outline of the proposed data-driven control methodology.
 It is worth mentioning that the presented methodology results in asymptotically stable behavior without resorting to persistency of excitation assumptions. This is a key contribution, setting us apart from most adaptive control techniques, since it is often hard to justify or validate persistency of excitation in practical applications. 

We would be remiss if we did not give due importance to the limitations of the proposed data-driven control methodology: it can be quite sensitive to measurement noise. This is a consequence of the need to estimate derivatives of sensed signals. While we leave a detailed study of how to best handle noise for future work, the experimental results in Section~\ref{Sec:Experimental} already offer evidence that the proposed data-driven methodology can be practically useful despite the aforementioned limitation. 

\subsection{Related work} As previously stated, the results in this paper were directly inspired by the work of Fliess and Join on intelligent PID controllers. We regard the papers~\cite{FJ09,MFC13} as entry points into this literature since the number of papers on this topic has been growing over the last ten years. The main contributions with respect to this line of work are: 1) to rigorously formalize the idea that signals can be treated as constant in between sampling times provided the sampling rate is high enough; 2) to identify a class of nonlinear systems for which this type of data-driven controllers is guaranteed to result in asymptotically stable behavior. This was accomplished by: 1) proposing several modifications to intelligent PID controllers; 2) a feedback linearizability assumption; and 3) leveraging the work of Nesic and co-workers on estimation and control based on approximate models. Moreover, we also address the case where the control gain is unknown whereas it is assumed to be known in the intelligent PID literature. Although we focus on the simple case of single-input single-output systems, the attentive reader will notice the results can be generalized to multiple-input multiple-output, and partially feedback linearizable systems. We discuss such extensions in Section~\ref{Sec:Main}.

Two recent papers~\cite{DeePC,PT20}, inspired by behavioral techniques, have also proposed data-driven control techniques. It is shown, in both cases, that the proposed controllers can be used with nonlinear systems even though they were developed for linear systems. The key requirement is that the mismatch between the linear and nonlinear models is small. A similar idea is used in this paper: by choosing a suitably high sampling rate, a point-wise linear approximation suffices for control. For this reason the authors suspect it may be possible to combine these different perspectives to obtain even stronger results. The use of behavioral techniques for the development of data-driven control techniques is not recent and had been advocated before, see~\cite{MR08,MR09}. However the algorithms proposed in this earlier work are better suited for offline computation as they require several complex matrix operations. All the aforementioned papers, as well as~\cite{BKP19}, rely on acquiring enough sufficiently informative data to produce control inputs (see~\cite{waarde2019data} for a discussion on how much informative data is required for different control tasks). This requires that enough experiments are conducted using persistently exciting inputs. In contrast, no prior data or persistency of excitation is required for the results in this paper. 

The previous observation sets the current paper apart from much work on data-driven control as well as other work that, although was not developed under the recent data-driven perspective, can be interpreted as such. One such example is the use of extremum seeking ideas, originally developed for optimization purposes, for stabilization, see \cite{KrsticES}. Extremum seeking relies on persistent high-frequency perturbations to estimate gradients and for this reason it is only possible to establish practical stability with this technique. In this line of work, persistency of excitation is typically not stated as an assumption since it is enforced by incorporating high-frequency signals into the input. 

Another example, is the control of nonlinear systems using Euler approximations that are learned in real-time, see \cite{Mareels}. This line of work bears some similarities with the approach described in this paper. A key difference is that, while in \cite{Mareels} an approximate plant model is learned, in this paper we directly learn the input to be applied to the plant, which provides the benefit of not requiring knowledge of upper and lower bounds on the control gain. Furthermore, the results in \cite{Mareels} rely again on persistency of excitation which is enforced by design and, for this reason, cannot guarantee asymptotic stability but rather practical stability. 

The attentive reader might also find some similarities between the approach presented in this paper and Khalil's work on extended high-gain observers and feedback control via disturbance compensation \cite{HighGainObservers}. On the one hand, both of these approaches seek to guarantee the observer's and controller's dynamics are sufficiently fast relative to the plant's dynamics. On the other hand, this objective is achieved in very different ways. While in \cite{HighGainObservers} the key technical idea is the use of high gains to ``speed up'' the controller's dynamics with respect to the plant's, the proposed data-driven controllers ``slow down'' the plant's dynamics through high frequency sampling. By dispensing with the need for high gains, our data-driven approach becomes exempt from the peaking phenomenon, thereby not requiring saturation of the input or state estimates.

Preliminary versions of the results in this paper appeared in the conference publications~\cite{TMGA17,TF19}. While in~\cite{TMGA17} the control gain is assumed to be known this assumption was dropped in~\cite{TF19}. However, the results in~\cite{TF19} rely on a persistency of excitation assumption that, as previously mentioned, is difficult to verify in practice. In this paper we assume neither the control gain to be known (although we assume knowledge of its sign) nor persistency of excitation.
\color{black}

\section{Notation}
\subsection{Miscellanea}
The natural numbers, including zero, are denoted by $\N$, the real numbers by $\R$, the non-negative real numbers by $\R_0^+$, and the positive real numbers by $\R^+$. If $c:\R\to \R^n$ is a function of time, we denote its first time derivative by $\dot{c}$. When higher time derivatives are required, we use the notation $c^{(k)}$ defined by the recursion $c^{(1)}=\dot{c}$ and $c^{(k + 1)}=\left(c^{(k)}\right)^{(1)}$. The Lie derivative of a function $h:\R^n\to \R$ along a vector field $f:\R^n\to\R^n$, given by $\frac{\partial h}{\partial x}f$, is denoted by $L_fh$. 

Given a symmetric matrix $Q$ we denote by $\lambda_{\min}(Q)$ its smallest eigenvalue and by $\lambda_{\max}(Q)$ its largest eigenvalue.
\subsection{Big O notation}

Consider a function $f:\R_0^+\times \mathcal{Q}\to \R^n$ with $\mathcal{Q}\subseteq \R^n$. We will use the notation $f(t,x)=O_x(T)$ to denote the existence of constants $M,T\in \R^+$ so that for all $t\in [0,T]$ and $x\in \mathcal{Q}$ we have $\Vert f(t,x)\Vert \le MT\Vert x\Vert$ with $\Vert x\Vert$ denoting the 2-norm of $x$. Going forward we will only consider $T\le 1$, thus the following rules apply to this notation where the equalities below are to be used to replace the left-hand side with the right-hand side:
\begin{align*}
O_x(T^2)=O_x(T),\quad \left(O_x(T)\right)^2=O_{x^2}(T^2), \quad TO_x(T)=O_x(T^2),\quad g(x)O_x(T)=O_x(T).
\end{align*}

The subscript $x^2$ in $O_{x^2}(T^2)$ indicates we are squaring the norm, i.e., $O_{x^2}(T^2)$ denotes the upper bound $MT^2\Vert x\Vert^2$. Moreover, the function  $g$ is assumed to have bounded norm, i.e., there exists $b\in \R^+$ so that $\Vert g(x)\Vert\le b$ for all $x\in \mathcal{Q}$. To illustrate the use of these equalities, consider the equality $f(t,x)=O_x(T^2)$ which is defined by $\Vert f(t,x)\Vert\le MT^2\Vert x\Vert$. Given that we chose $T\le 1$, we have the bound $T^2\le T$ that enables us to conclude $\Vert f(t,x)\Vert\le MT\Vert x\Vert$, i.e., $f(t,x)=O_x(T)$. Using the above rules we can directly replace $f(t,x)=O_x(T^2)$ with $f(t,x)=O_x(T)$.

\section{Models}
\label{Sec:Models}
We consider an unknown single-input single-output nonlinear system described by:
\setlength{\arraycolsep}{0.0em}
\begin{eqnarray}
\label{UnknownSystem1a}
\dot{x}&{}={}&f(x) + g(x)u\\
\label{UnknownSystem1b}
y&{}={}&h(x) + d,
\end{eqnarray}
\setlength{\arraycolsep}{5pt}
where $f:\R^n\to\R^n$, $g:\R^n\to\R^n$, and $h:\R^n\to \R$ are smooth functions and  we denote by $y\in \R$, $x\in \R^n$, $u\in \R$, $d\in\R$, the output, state, input, and measurement noise, respectively. We make the assumption that the output function $h$ has relative degree $n$, i.e., this system is feedback linearizable. This means that $L_gL_f^ih(x)=0$ for \mbox{$i=0,\hdots,n - 2$} and $L_gL_f^{n - 1}h(x)\ne 0$ for all $x\in \R^n$. Since the function $L_gL_f^{n - 1}h$ is continuous and never zero, its sign is constant. We will assume the sign of $L_gL_f^{n - 1}h$ to be known and, without loss of generality, take it to be positive. Knowledge of the sign of $L_gL_f^{n - 1}h$ is not a strong assumption beyond $L_gL_f^{n - 1}h\ne 0$. A simple input/output experiment can be performed to infer the sign of $L_gL_f^{n - 1}h$. While requiring some knowledge of the control gain in the form of bounds is standard practice when handling unknown systems, such is the case in, e.g.,~\cite{Mareels},~\cite{KrsticES}, our approach is free of such an assumption. 


With the objective of presenting the results in its most understandable form, we assume $n=2$ throughout this paper, although all the results hold for arbitrary $n\in \N$. This will enable us to perform all the necessary computations explicitly and without the need for distracting bookkeeping. To further reduce bookkeeping, we will perform most of the analysis under the assumption of noise free measurements (i.e., $d=0$), which will lead to our main result, Theorem \ref{Theorem1}. Given that this assumption does not usually hold when working with physical systems, we also provide Theorem \ref{Theorem2} which establishes stability guarantees under essentially bounded measurement noise. 

Invoking the feedback linearizability assumption, we can rewrite the unknown dynamics in the coordinates \mbox{$(z_1,z_2)=\Psi(x)=(h(x),L_fh(x))$}:
\setlength{\arraycolsep}{0.0em}
\begin{eqnarray}
\label{UnknownSystem2a}
\dot{z}_1&{}={}&z_2\\
\label{UnknownSystem2b}
\dot{z}_2&{}={}&\alpha(z) + \beta(z)u\\
\label{UnknownSystem2c}
y &{}={}& z_1,
\end{eqnarray}
\setlength{\arraycolsep}{5pt}
where $\alpha=L_f^2h\circ\Psi^{-1}$ and $\beta=L_gL_fh\circ\Psi^{-1}$. We note that $f$, $g$, and $h$ are unknown and thus so are $\alpha$ and $\beta$. This form of the dynamics has the advantage of using the two scalar valued functions $\alpha$ and $\beta$ to describe the full dynamics, independently of the value of $n$. This is a key observation that underlies the claim that the results below hold for arbitrary $n\in \N$.

System~\eqref{UnknownSystem2a}-\eqref{UnknownSystem2c} will be controlled using piece-wise constant inputs for a sampling time $T\in \R^+$. This means that inputs $u:\R_0^+ \to \R$ satisfy the following equality for all $k\in \N$:
$$u(kT + \tau)=u(kT ),\qquad \forall \tau\in [ 0,T[.$$ 
It will be convenient to use $u$ to denote an input only defined on $[0,T[$. Since the curve $u$ is constant on the interval $[0,T[$, we identify it with the corresponding element of $\R$.

The solution of~\eqref{UnknownSystem2a}-\eqref{UnknownSystem2b} is denoted by \mbox{$F_t^e(z,u)=(F_{t,1}^e(z,u),F_{t,2}^e(z,u))$}, for $t\in [0,T[$, and satisfies $F_0^e(z,u)=z$. The superscript ``$e$'' reminds us that this is an exact solution. In the next section we discuss approximate solutions.

\section{Approximate models}
In this section we develop an approximate solution of ~\eqref{UnknownSystem2a}-\eqref{UnknownSystem2b} based on the well known Taylor's theorem that we now recall.
\begin{theorem}[See \cite{RudinBook}]
Let $c:\mathcal{I}\to \R^n$ be an $n$ times differentiable function where $\mathcal{I}\subseteq \R$ is an open and connected set. For any $t,\tau\in \mathcal{I}$ such that $\tau + t\in \mathcal{I}$ we have:
\begin{equation}
\label{Taylor}
c(\tau + t){}={}c(\tau) + c^{(1)}(\tau)t + c^{(2)}(\tau)\frac{t^2}{2} + \hdots + c^{(n - 1)}(\tau)\frac{t^{n - 1}}{(n - 1)!} + c^{(n)}(\tau')\frac{t^{n}}{n!}, 
\end{equation}
for some $\tau'\in [\tau,\tau + t]$.
\end{theorem}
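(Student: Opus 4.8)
The plan is to reduce the vector-valued identity to the scalar case and to establish the one-dimensional Taylor expansion with Lagrange remainder by an auxiliary-function argument powered by repeated application of Rolle's theorem. First I would fix $\tau$ and $t$ and assume without loss of generality that $t>0$ (the case $t<0$ is symmetric after reversing the endpoints, and $t=0$ is trivial with any $\tau'=\tau$). Writing $a=\tau$ and $b=\tau+t$, connectedness of $\mathcal{I}$ (an interval) guarantees that the whole closed segment $[a,b]$ lies in $\mathcal{I}$, so every derivative of $c$ up to order $n$ is defined there and Rolle's theorem can be invoked on subintervals of $[a,b]$.

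Working with a scalar component $c_i$, I would first pin down the remainder by defining the constant $M_i$ so that the target identity holds at the endpoint, namely
\[
c_i(b)=\sum_{k=0}^{n-1}\frac{c_i^{(k)}(a)}{k!}(b-a)^k+\frac{M_i}{n!}(b-a)^n,
\]
and then introduce the auxiliary function
\[
\phi_i(s)=c_i(s)-\sum_{k=0}^{n-1}\frac{c_i^{(k)}(a)}{k!}(s-a)^k-\frac{M_i}{n!}(s-a)^n.
\]
By the matching-derivatives property of the Taylor polynomial, together with the fact that $(s-a)^n$ contributes nothing below order $n$, one gets $\phi_i^{(j)}(a)=0$ for $j=0,\dots,n-1$, while $\phi_i(b)=0$ by the very choice of $M_i$. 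Applying Rolle's theorem to $\phi_i$ on $[a,b]$ yields a zero of $\phi_i^{(1)}$; pairing that zero with $\phi_i^{(1)}(a)=0$ and applying Rolle again yields a zero of $\phi_i^{(2)}$, and iterating the argument $n$ times produces $\tau_i'\in(a,b)$ with $\phi_i^{(n)}(\tau_i')=0$. Since $\phi_i^{(n)}(s)=c_i^{(n)}(s)-M_i$, this gives $M_i=c_i^{(n)}(\tau_i')$, which is exactly the claimed expansion for the $i$-th coordinate.

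A detail I would be careful about is the regularity actually assumed: $c$ is only $n$ times differentiable, not $C^n$, so the alternative route through the integral form of the remainder and integration by parts is not available, as it would require integrability of $c^{(n)}$. The Rolle-based argument is the correct tool here precisely because it uses only differentiability of $c^{(n-1)}$, and at each of the $n$ stages the hypotheses of Rolle's theorem (continuity on the closed subinterval, differentiability on its interior) are satisfied.

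The genuine obstacle is the single evaluation point $\tau'$ appearing in the vector statement. The argument above delivers a point $\tau_i'$ for each coordinate, and in general there is no common $\tau'$ serving all coordinates at once: the mean value theorem already fails for vector-valued maps, as the example $s\mapsto(\cos s,\sin s)$ over a full period shows. I would therefore read \eqref{Taylor} componentwise, with $c^{(n)}(\tau')$ interpreted as the vector whose $i$-th entry is $c_i^{(n)}(\tau_i')$. This is harmless for what follows, since \eqref{Taylor} is applied to each scalar component of the flow separately in deriving the approximate-model estimates.
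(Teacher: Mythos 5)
Your proof is correct, and in fact it is essentially the proof in the cited source: the paper itself offers no argument for this statement (it is quoted from Rudin), and Rudin's own proof is exactly your construction --- fix $M_i$ by the endpoint condition, form the auxiliary function $\phi_i$, and apply Rolle $n$ times using $\phi_i^{(j)}(a)=0$ for $j=0,\dots,n-1$ together with $\phi_i(b)=0$. Your regularity remark is also on point: the hypothesis is only $n$-fold differentiability, so the integral-remainder route is unavailable and the Rolle argument is the right one.

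Your final observation is a genuine improvement on the statement as printed. For a map into $\R^n$ there is in general no single $\tau'$ serving all coordinates (your $s\mapsto(\cos s,\sin s)$ example is the standard counterexample), so \eqref{Taylor} must be read componentwise, with a possibly different intermediate point per coordinate, or with the last term replaced by a norm bound $\sup_{\tau'\in[\tau,\tau+t]}\Vert c^{(n)}(\tau')\Vert\,\frac{|t|^n}{n!}$. This imprecision is harmless in the paper: the theorem is only ever applied to the scalar-valued first component $F^e_{t,1}$ of the flow, and the downstream use in Proposition~\ref{Prop:BoundTaylor} is a norm estimate, for which the componentwise reading suffices.
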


Applying this result to $F_{\tau + t,1}^e$ we obtain:
\begin{align*}
F_{\tau + t,1}^e(z,u)=&F_{\tau,1}^e(z,u) + \left(F_{\tau,1}^e\right)^{(1)}(z,u)t + \left(F_{\tau,1}^e\right)^{(2)}(z,u)\frac{t^2}{2} + \left(F_{\tau',1}^e\right)^{(3)}(z,u)\frac{t^3}{3!}.
\end{align*}
If we only retain the first three terms we obtain an approximate solution with an  approximation error given by the magnitude of the (neglected) fourth term. The following result provides a bound for the approximation error in a form useful for the results derived in this paper.

\begin{proposition}
\label{Prop:BoundTaylor}
Let $\mathcal{D}\subset\R^3$ be a compact set. Then, there exist $T\in \R^+$ and $M\in \R^+$ such that:
\begin{equation}
\label{BoundInProp}
\left\Vert \left(F_{\tau',1}^e\right)^{(3)}(z,u)\frac{t^3}{3!}\right\Vert\le M{T}^3\Vert (z,u - u_0) \Vert,
\end{equation}
for all $(z,u)\in \mathcal{D}$, all $t,\tau'\in [0,T]$, and where \mbox{$u_0= -\beta^{-1}(0)\alpha(0)$}.
\end{proposition}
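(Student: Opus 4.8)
The plan is to reduce the bound \eqref{BoundInProp} to a statement about a single smooth scalar function that vanishes at the equilibrium $(z,u)=(0,u_0)$, and then to convert that vanishing into a bound linear in $\Vert(z,u-u_0)\Vert$ by the mean value inequality. The essential insight is that $t^3/3!$ already supplies the factor $T^3$ (since $t\le T$), so the only real content is that the third derivative itself is bounded by a constant times $\Vert(z,u-u_0)\Vert$.

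First I would write out $(F_{\tau',1}^e)^{(3)}$ explicitly. Denoting the flow by $\zeta(\tau)=F_\tau^e(z,u)$, the dynamics \eqref{UnknownSystem2a}--\eqref{UnknownSystem2b} give $(F_{\tau,1}^e)^{(1)}=\zeta_2$ and $(F_{\tau,1}^e)^{(2)}=\alpha(\zeta)+\beta(\zeta)u$, so differentiating once more via the chain rule,
\[
(F_{\tau,1}^e)^{(3)} = \nabla\alpha(\zeta)\cdot\big(\zeta_2,\,\alpha(\zeta)+\beta(\zeta)u\big) + \Big(\nabla\beta(\zeta)\cdot\big(\zeta_2,\,\alpha(\zeta)+\beta(\zeta)u\big)\Big)\,u .
\]
The crucial observation is that $(z,u)=(0,u_0)$, with $u_0=-\beta^{-1}(0)\alpha(0)$, is an equilibrium of \eqref{UnknownSystem2a}--\eqref{UnknownSystem2b}: there $\zeta_2=0$ and $\alpha(0)+\beta(0)u_0=0$, hence $\dot\zeta=0$, the flow is stationary ($\zeta(\tau)\equiv 0$), and the displayed expression vanishes identically in $\tau'$. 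Thus the scalar function $\phi(z,u,\tau'):=(F_{\tau',1}^e)^{(3)}(z,u)$ satisfies $\phi(0,u_0,\tau')=0$.

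Next I would address regularity and uniformity. Since $f,g,h$ are smooth, so are $\alpha$ and $\beta$, and by smooth dependence of solutions on initial conditions and parameters the map $(\tau,z,u)\mapsto F_\tau^e(z,u)$ is smooth wherever defined; hence $\phi$ is smooth. Using compactness of $\mathcal{D}$ and continuity of the vector field, I would fix $T\in\R^+$ small enough that $F_\tau^e(z,u)$ is defined for all $\tau\in[0,T]$ and all $(z,u)$ in a compact convex set $K\subset\R^3$ containing both $\mathcal{D}$ and $(0,u_0)$ (for instance a closed ball containing their convex hull). On the compact set $K\times[0,T]$ the partial gradient $\nabla_{(z,u)}\phi$ is continuous, hence bounded in norm by some $L\in\R^+$. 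Because $K$ is convex and $\phi(0,u_0,\tau')=0$, the mean value inequality applied along the segment from $(0,u_0)$ to $(z,u)$ yields
\[
\big|\phi(z,u,\tau')\big| = \big|\phi(z,u,\tau')-\phi(0,u_0,\tau')\big| \le L\,\Vert(z,u-u_0)\Vert
\]
for all $(z,u)\in\mathcal{D}$ and $\tau'\in[0,T]$.

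Finally, since $t\le T$ we have $t^3/3!\le T^3/6$, and multiplying the previous estimate by $t^3/3!$ gives \eqref{BoundInProp} with $M=L/6$. The main obstacle is not any individual estimate but the uniformity bookkeeping: selecting one $T$ that simultaneously guarantees existence of the flow, keeps its trajectories inside a fixed compact set over $K\times[0,T]$, and makes the gradient bound $L$ finite and independent of $\tau'$. Once the equilibrium-vanishing $\phi(0,u_0,\tau')=0$ is established, the linear factor $\Vert(z,u-u_0)\Vert$ is forced, and the remainder of the argument is routine.
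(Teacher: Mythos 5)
Your proposal is correct and follows essentially the same route as the paper: both hinge on the observation that $(z,u)=(0,u_0)$ is an equilibrium of \eqref{UnknownSystem2a}--\eqref{UnknownSystem2b}, so that $\left(F_{\tau',1}^e\right)^{(3)}$ vanishes there, and then convert this vanishing into a bound $L\Vert(z,u-u_0)\Vert$ via a Lipschitz/mean-value estimate, absorbing $t^3/3!\le T^3/6$ into the constant $M$. The only difference is cosmetic: you take a uniform gradient bound on a convex compact $K$ containing both $\mathcal{D}$ and $(0,u_0)$, whereas the paper uses a Lipschitz constant $L(\tau')$ on $\mathcal{D}$ for each fixed $\tau'$ and then maximizes over $\tau'\in[0,T]$; your enlargement to $K$ is in fact slightly tidier, since the paper's Lipschitz inequality is stated only for pairs of points in $\mathcal{D}$, and $(0,u_0)$ need not lie in $\mathcal{D}$.
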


Using the $O$ notation, this result states that:
$$\left(F_{\tau',1}^e\right)^{(3)}(z,u)\frac{t^3}{3!}=O_{(z,u - u_0)}(T^3).$$

\begin{proof}
Since~\eqref{UnknownSystem2a}-\eqref{UnknownSystem2b} is a smooth differential equation (recall that inputs are constant), solutions exist for all $\tau\in [0,T_{z,u}[$ where $[0,T_{z,u}[$ is the maximal interval for which the solution $F_{\tau,1}^e(z,u)$ exists. The function $(z,u)\mapsto T_{z,u}$ is lower semi-continuous and, given that $(z,u)$ belongs to the compact set $\mathcal{D}$, it achieves its minimum on $\mathcal{D}$. Let $T\in \R^+$ be smaller than $\min_{(z,u)\in \mathcal{D}}T_{z,u}$. By definition of $T$, for any $(z,u)\in \mathcal{D}$ solutions exist on the interval $[0,T]$. Consider now the function $\left(F_{\tau',1}^e\right)^{(3)}$ and note it is continuously differentiable, by assumption, and thus Lipschitz continuous on $\mathcal{D}\times \{\tau'\}$ for each fixed $\tau'\in [0,T]$. Hence, by definition of Lipschitz continuity we have:
\setlength{\arraycolsep}{0.0em}
\begin{equation}
\label{Eq:Lips}
\left\Vert \left(F_{\tau',1}^e\right)^{(3)}(z,u)\right. -\left. \left(F_{\tau',1}^e\right)^{(3)}(z',u')\right\Vert \le L(\tau')\Vert (z,u) - (z',u')\Vert
\end{equation}
\setlength{\arraycolsep}{5pt}
for all $(z,u),(z',u')\in \mathcal{D}$ and all $\tau'\in [0,T]$. Noting that, according to~\eqref{UnknownSystem2a}-\eqref{UnknownSystem2b},  $F_{\tau'}^e(0,u_0)=0$ for $u_0= -\beta^{-1}(0)\alpha(0)$ and all $\tau' \in [0,T]$, we conclude that $\left(F_{\tau',1}^e\right)^{(3)}(0,u_0)=0$.
Using this equality in~\eqref{Eq:Lips} we obtain:
\begin{align*}
\left\Vert \left(F_{\tau',1}^e\right)^{(3)}(z,u)\right\Vert \le L(\tau')\Vert (z,u) - (0,u_0)\Vert=L(\tau')\Vert (z,u - u_0)\Vert,
\end{align*}
by setting $z'=0$ and $u'= - u_0$. If we now take \mbox{$M=\frac{1}{3!}\max_{\tau' \in[0,T]}L(\tau')$} we obtain the desired inequality. Note that $M$ is well defined since $L$ is continuous and $[0,T]$ compact.
\end{proof}

Based on Proposition~\ref{Prop:BoundTaylor} we can write the exact solution $F_t^e$ of~\eqref{UnknownSystem2a}-\eqref{UnknownSystem2b} valid for all $t\in [0,T[$, as:
\setlength{\arraycolsep}{0.0em}
\begin{eqnarray}
\label{ModelOrder3a}
F_{t,1}^e(z,u)&{}={}&z_1 + z_2t + (\alpha(z) + \beta(z)u)\frac{t^2}{2} + O_{(z,u - u_0)}(T^3) \\
\label{ModelOrder3b}
F_{t,2}^e(z,u)&{}={}&z_2 + (\alpha(z)  +  \beta(z)u)t + O_{(z,u - u_0)}(T^2).
\end{eqnarray}
\setlength{\arraycolsep}{5pt}
By setting\footnote{Although $t\in [0,T[$, solutions are not altered by changing the input on a zero measure set.} $t$ equal to $T$, the previous model provides a \emph{family} of discrete-time \emph{approximate} models indexed by $T$:
\setlength{\arraycolsep}{0.0em}
\begin{eqnarray}
\label{ModelOrder33a}
z_1(k + 1)&{}={}&z_1(k) + z_2(k)T + (\alpha(k) + \beta(k) u(k))\frac{T^2}{2} \\
\label{ModelOrder33b}
z_2(k + 1)&{}={}&z_2(k) + (\alpha(k) + \beta(k) u(k))T,
\end{eqnarray}
\setlength{\arraycolsep}{5pt}
where $z(k), \alpha(k)$, and $\beta(k)$ denote the value of $z$, $\alpha(z)$, and $\beta(z)$ at time $kT$, $k\in \N$, respectively. For later use we introduce the notation:
\begin{eqnarray}
F_{T,1}^a(z,u)&{}\stackrel{def.}{=}{}&z_1(k) + z_2(k)T + (\alpha(k) + \beta(k) u(k))\frac{T^2}{2}\notag \\
F_{T,2}^a(z,u)&{}\stackrel{def.}{=}{}&z_2(k) + (\alpha(k) + \beta(k) u(k))T, \notag
\end{eqnarray}
where the superscript ``$a$'' emphasizes the fact that $z$ is the solution of an approximate model.
\color{black}
%
%

\section{A data-driven control design methodology}
\label{Sec:Methodology}
In this section we summarize the proposed data-driven control design methodology that is presented in detail in Sections~\ref{Sec:StateEstimation} and Section~\ref{Sec:ControllerDesign}. The design will be based on different approximate models, all of which are based on the discrete-time approximate model~\eqref{ModelOrder33a}-\eqref{ModelOrder33b}. We start by observing that the model~\eqref{ModelOrder33a}-\eqref{ModelOrder33b} is affine and thus all the design techniques described in this paper only require knowledge of linear systems theory. 

The affine nature of the model~\eqref{ModelOrder33a}-\eqref{ModelOrder33b} suggests that we could use the preliminary controller:
\begin{equation}
\label{preliminary}
\overline{u}(k)=\beta^{-1}(z(k))( - \alpha(z(k))  + v(z(k)),
\end{equation}
where $v(z)$ is a new input, to cancel the effect of the nonlinear functions $\alpha$ and $\beta$ provided that $z(k)$ and the values of $\alpha$ and $\beta$ at the current state $z(k)$ were known. After this preliminary controller, it would be easy to design a virtual controller stabilizing the resulting linear system with input $v$. As an example design technique, we show in Section~\ref{Sec:ControllerDesign} how to design linear controllers that perform this task.

By considering\footnote{Formally justifying this design assumption is one of the purposes of the results in Section~\ref{Sec:Main}.} $\alpha$ and $\beta$ to be constant functions in ~\eqref{ModelOrder33a}-\eqref{ModelOrder33b} we obtain an observable linear system by formally treating $\alpha + \beta u$ as a new state $z_3$ and using the measurement equation $y=z_1$. Hence, any technique to reconstruct the state of an observable linear system can be employed provided the reconstruction error is of order $T$, as specified by equation~\eqref{StateEstimationError} in Section~\ref{Sec:StateEstimation}. As an example design technique, in Section~\ref{Sec:StateEstimation} we propose to reconstruct the state by directly solving the equation $Y=\mathcal{O}z$ where $Y$ is a sequence of measurements and $\mathcal{O}$ is the observability matrix of the aforementioned observable linear system.

Once an estimate of $z_3$ is obtained, we formally treat $z_3$ as an observation. It is well known that reconstructing $\alpha$ and $\beta$ from the measurement equation $z_3=\alpha + \beta u$ is not possible unless a persistency of excitation assumption is placed on the input $u$. Rather than assuming persistency of excitation, we note this type of problem has been extensively studied in adaptive control~\cite{StableAdaptiveSystemsBook,RobustAdaptiveControlBook} and it is known that any choice of parameters $\alpha$ and $\beta$ that satisfies the measurement equation $z_3=\alpha + \beta u$ suffices for control purposes. Inspired by this, we will directly utilize the observation $z_3$ in a dynamic controller generating inputs which asymptotically converge to those generated by our preliminary static controller \eqref{preliminary}.

Once the two aforementioned components -- state estimator and static controller -- have been designed to satisfy the relations~\eqref{StateEstimationError} and~\eqref{DefineV}, it will follow from our main result, Theorem~\ref{Theorem1}, that their concurrent execution, combined with the dynamic controller we provide, will result in asymptotically stable behavior.



\section{State estimation}
\label{Sec:StateEstimation}
For state estimation purposes it is convenient to formally treat $\alpha(k) + \beta(k) u(k)$, in the family of approximate models~\eqref{ModelOrder33a}-\eqref{ModelOrder33b}, as the state $z_3$ to obtain:
\setlength{\arraycolsep}{0.0em}
\begin{eqnarray}
\label{ApproxModel3States1}
z_1(k + 1) &{}={}& z_1(k) + z_2(k)T + z_3(k)\frac{T^2}{2}\\
\label{ApproxModel3States2}
z_2(k + 1) &{}={}& z_2(k) + z_3(k)T\\
\label{ApproxModel3States3}
z_3(k + 1)&{}={}&z_3(k).
\end{eqnarray}
\setlength{\arraycolsep}{5pt}
Note that this \emph{approximate} model states that $z_3$ is constant although $\left(F_{t,1}^e\right)^{(2)}$ will, in general, not be so. Equality~\eqref{ApproxModel3States3} follows from applying Proposition~\ref{Prop:BoundTaylor} to $\left(F_{t,1}^e\right)^{(2)}$ and dropping the error term $O_{(z,u - u_0)}(T)$. Since~\eqref{ApproxModel3States1}-\eqref{ApproxModel3States3} is a linear model, it can be written in the form:
$$z(k + 1)=Az(k),\qquad y(k)\stackrel{def.}{=} z_1(k)=Cz(k).$$ 
Moreover, it can be easily checked that $A$ is invertible and we thus denote by $\mathcal{O}$ the observability matrix for the pair $(A^{-1},C)$ which allows us to write:
\begin{equation}
\label{DefY}
Y(k)\stackrel{def.}{=}\begin{bmatrix}y(k)\\y(k - 1)\\\vdots\\y(k - \rho + 1)\end{bmatrix}=\mathcal{O}z(k),
\end{equation}
where $\rho\in \N$, $\rho\ge n + 1$, is the number of measurements that will be used for state estimation. The estimate $\widehat{z}(k)$ of the state vector $z(k)$ can then be obtained by solving this equation via least-squares:
\begin{equation}
\label{StateEstimateFinal}
\widehat{z}(k)=(\mathcal{O}^{T}\mathcal{O})^{-1}\mathcal{O}^T Y(k).
\end{equation}
Given that equalities~\eqref{ApproxModel3States1},~\eqref{ApproxModel3States2}, and~\eqref{ApproxModel3States3} only hold up to $O_{(z,u - u_0)}(T^3)$, $O_{(z,u - u_0)}(T^2)$, and $O_{(z,u - u_0)}(T)$, respectively, we can easily establish the equality \mbox{$z=\widehat{z} + O_{(z,u - u_0)} (T)$}.
If we introduce the estimation error $e_z$, defined by $e_z=z - \widehat{z}$, it follows that:
\begin{equation}
\label{StateEstimationError}
e_z=O_{(z,u - u_0)}(T). 
\end{equation}
It is straightforward to show that in the presence of essentially bounded noise on the measurements \eqref{DefY} the state estimation error is given by:
\begin{equation}
\label{StateEstimateNoise}
e_z=O_{(z,u - u _0)}(T) + O_{\,\overline{d}\,}(T^{-n}),\quad \overline{d}\stackrel{def.}{=}\text{ess}\sup_{t\in \R_0^+}\Vert d(t)\Vert,
\end{equation}
where $n$ is the relative degree of the system.

As previously stated, the control scheme proposed in Section~\ref{Sec:ControllerDesign} only depends on the preceding equality. Hence, we can replace least-squares estimation with any other estimation technique leading to~\eqref{StateEstimationError}. In particular, the parameter $\rho$ is not relevant to the theoretical analysis although it will play an important role in mitigating the effect of sensor noise: larger values of $\rho$ ``average out'' the effect of noise. 

\begin{remark}
In~\cite{RJ09} it is shown that the algebraic techniques proposed in~\cite{MJF07}, and used in~\cite{FJ09,MFC13} to estimate derivatives of a measured signal, can be interpreted as estimating the state of the state-space linear model governing the signals $y$ satisfying $y^{(3)}=0$. If we denote the constructability Gramian of this linear model by $W_{cn}$ and its state-transition matrix by $\Phi$, the estimate is given by the well known expression (see (3.9), page 250,~\cite{LinearSystems}):
$$W_{cn}^{-1}\int_{t_0}^{t_1}\Phi^T(\tau,t_1)C^Ty(\tau)d\tau.$$
Equality~\eqref{StateEstimateFinal} can be seen as the discrete-time analogue of this finite-time estimation technique.
\end{remark}
\begin{remark}
The matrix $(\mathcal{O}^{T}\mathcal{O})^{-1}\mathcal{O}^T$ contains terms of the form $T^{-1}$ on its second row and terms of the form $T^{-2}$ on its third row. Hence, it can be conceptually understood as a linear high-gain observer with finite-time convergence and where $T$ plays the role of the parameter $\varepsilon$ used in~\cite{HighGainObservers}. Similarly to high-gain observers, the estimate provided by~\eqref{StateEstimateFinal} can be very sensitive to measurement noise. This can be mitigated by using more samples for estimation so as to ``average out'' noise, i.e., by increasing $\rho$. Contrary to high-gain observers, however, we do not need to explicitly worry about the peaking phenomenon when computing the estimate since it is not computed recursively. As mentioned before,~\eqref{StateEstimateFinal} could be replaced with a high-gain observer or even the more recent low-power high-gain observers~\cite{LowPowerHighGain}. Which specific estimation technique works better in practice, and in the context of the results in this paper, is an important problem that we leave for future research.
\end{remark}

\section{Controller design}
\label{Sec:ControllerDesign}
If we assume the parameters $\alpha$ and $\beta$ to be known, we can design a family of controllers (parameterized by $T$) for the family of approximate models~\eqref{ModelOrder33a}-\eqref{ModelOrder33b} with the objective of  asymptotically stabilizing the origin in the following specific sense: there exists a symmetric and positive definite matrix $P_z$ and constants $\lambda_z,T_0\in \R^+$ so that $V_z(z)=z^TP_z z$ satisfies:
\begin{equation}
\label{DefineV}
V_z(F_T^a(z,\overline{u})) - V_z(z){}\le{} - \lambda_z T\Vert z(k)\Vert^2 + O_{(z,u - u _0)^2}(T^2),
\end{equation}
for all $T$ in the interval $[0,T_0]$. Strikingly, we can achieve this inequality with the very simple family of virtual controllers which is independent of $T$:
\begin{eqnarray}
\label{Controller1}
\overline{u}&=&\beta^{-1}(- \alpha + v(z)),\\
\label{Controller2}
v(z)&=&Kz,
\end{eqnarray}
where $K$ is a suitable matrix. We note that the approximate model~\eqref{ModelOrder33a}-\eqref{ModelOrder33b} can be written as:
\begin{equation}
\label{AffineModel}
F_T^a(z(k),\overline{u})=Az(k) + B\alpha(k) + B\beta(k)\overline{u}(k)=Az(k) + Bv(z(k)),
\end{equation}
where the matrices $A$ and $B$ are of the form:
$$A=I + A_1T,\qquad B=B_1T + B_2 T^2.$$
Since $(A_1,B_1)$ is a controllable pair, there exists a controller $v(z)=Kz$ and a symmetric and positive definite matrix $P_z$ so that:
\begin{equation}
\label{DefineVAgain}
(A_1 + B_1K)^TP_z + P_z(A_1 + B_1K)=  - Q,
\end{equation}
for some symmetric and positive definite matrix $Q$. Using this controller we have:
\begin{equation}
F_T^a(z(k),\overline{u})=(A + BK)z=(I  +  (A_1 + B_1K)T + B_2K T^2)z.\notag
\end{equation}
Computing $V_z(F_T^a(z(k),\overline{u})) - V_z(z)$ provides:
\setlength{\arraycolsep}{0.0em}
\begin{eqnarray}
V_z(F_T^a(z(k),\overline{u})) - V_z(z) &{}={}& z^T(A + BK)^TP_z(A + BK)z - z^TP_z z \notag \\
&{}={}& z^T((A_1 + B_1K)T)^TP_z z + z^TP_z ((A_1 + B_1K)T)z\notag \\
&&{+}\: O_{z^2}(T^2) + O_{z^2}(T^3) + O_{z^2}(T^4)\notag\\
&{}={}& - Tz^TQz + O_{z^2}(T^2) \notag \\
&{}\le{}&  - \lambda_{\min}(Q)T\Vert z\Vert^2 + O_{(z,u - u _0)^2}(T^2),\notag
\end{eqnarray}
\setlength{\arraycolsep}{5pt} 
which is the desired inequality~\eqref{DefineV}.

The dynamics in \eqref{AffineModel} are stated for the preliminary control law $\overline{u}=\beta^{-1}(-\alpha +  v(z))$, yet since neither $\alpha$ nor $\beta$ are known, this controller cannot be directly implemented. Instead, we note that this controller enforces $\alpha+\beta u = v(z)$ and design a dynamic controller that asymptotically enforces this equality by guaranteeing convergence to the origin of the error:
\begin{equation}
\label{eu}
e_u(k) = v(z(k))-\left(\alpha(k)+\beta(k) u(k) \right).
\end{equation}
To achieve this we propose a dynamic control law of the following form:
\begin{equation}
\label{DynamicControllerFinal}
u(k+1)=u(k)+\gamma\left(v(\widehat{z}(k))-\widehat{z}_3(k)\right),
\end{equation}
where $\gamma \in \R^+$ is sufficiently small\footnote{If an upper bound $\overline{\beta}$ for $\beta$ is known, $\gamma<\overline{\beta}^{-1}$ suffices.} In order to fully specify this controller, we need to describe its operation during the initial transient of $\rho - 1$ steps during which enough measurements are collected to produce the first state estimate according to~\eqref{StateEstimateFinal}. We simply choose a fixed sequence of inputs $u^*_0,u^*_1,\hdots,u^*_{\rho - 2}$ to be used during this transient. Although different sequences will lead to different transients, the results in Section~\ref{Sec:Main} are independent of this choice.

The main results in the next section explain why such a dynamic controller works despite being designed for an approximate model while assuming knowledge of the exact values of the parameters and states in its design.

%
%

\section{Main results}
\label{Sec:Main}

\subsection{The noise-free scenario}
It is pedagogically convenient to start with the noise-free scenario, i.e., $d=0$ in~\eqref{UnknownSystem1b}, as it allows us to expose the key ideas in a simpler manner. Notwithstanding the absence of noise, the proofs of the main results in this section are quite long and for this reason can be found in the Appendix. The authors hope its length does not hide the simple idea upon which it rests: we can formally justify the use of approximate models for observer and controller design by using the frameworks developed by Arcak and Nesic in~\cite{AN04} for the former, and by Nesic and Teel in~\cite{NT04} for the latter. This combination of ingredients shows that for any compact set of initial conditions there exists a sufficiently small sampling time ensuring the proposed controller keeps all the signals bounded and drives the state to the origin. 

\begin{theorem}
\label{Theorem1}
Consider an unknown nonlinear system of the form~\eqref{UnknownSystem1a}-\eqref{UnknownSystem1b} where the output function $h$ has relative degree $n$.  In the absence of measurement noise, i.e., $d=0$, for any compact set $\mathcal{S}\subset \R^n$ of initial conditions containing the origin in its interior there exists a time $T^*\in \R^+$ and a constant $b\in \R^+$ (both depending on $\mathcal{S}$) so that for any sampling time $T\in [0,T^*]$, the dynamic controller \eqref{DynamicControllerFinal}, where the virtual input $v$ is provided by \eqref{Controller2}, using the state estimates provided by an estimation technique satisfying~\eqref{StateEstimationError}, renders the closed-loop trajectories bounded, i.e., $\Vert \widehat{z}(k)\Vert\le b$ and $\Vert e_u \Vert\le b$ for all $k\in \N$, and $\Vert x(t)\Vert\le b$ for all $t\in \R_0^+$. Moreover: 
$$\lim_{t\to\infty} x(t)=0.$$
\end{theorem}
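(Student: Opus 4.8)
The plan is to work entirely in the feedback-linearized coordinates $z=\Psi(x)$, reducing the continuous-time claim to a statement about the sampled sequence $z(k)$ together with the inter-sample flow. Since $\Psi$ is a diffeomorphism and $F_\tau^e(0,u_0)=0$, boundedness and convergence of the samples $z(k)$ (and of the input toward $u_0$) transfer to $\|x(t)\|$ for all $t$: on each interval $[kT,(k+1)T)$ the state equals $F_\tau^e(z(k),u(k))$, which is continuous in its arguments and tends to $0$ as $(z(k),u(k))\to(0,u_0)$, so the inter-sample excursion vanishes with the samples. Throughout, the analysis is carried out inside a compact set $\mathcal{D}$ on which every constant hidden in the $O$-notation is valid — the Taylor bound of Proposition~\ref{Prop:BoundTaylor}, the Lipschitz constants of $\alpha,\beta$, and the estimation constant of~\eqref{StateEstimationError} — and a final invariance argument must certify that trajectories remain in $\mathcal{D}$.

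The core object is a coupled discrete-time system in $(z(k),e_u(k))$, where $e_u$ is the controller error~\eqref{eu}. Writing $\alpha(k)+\beta(k)u(k)=v(z(k))-e_u(k)$ and combining~\eqref{AffineModel} with the remainders in~\eqref{ModelOrder3a}-\eqref{ModelOrder3b} gives $z(k+1)=(A+BK)z(k)-Be_u(k)+O_{(z,u-u_0)}(T^2)$: the nominal closed loop of Section~\ref{Sec:ControllerDesign} perturbed by the input-like signal $-Be_u$ with $B=O(T)$. For the error dynamics I would substitute $\widehat z=z-e_z$, $\widehat z_3=z_3-e_{z_3}$, with $e_z,e_{z_3}=O_{(z,u-u_0)}(T)$ from~\eqref{StateEstimationError}, into~\eqref{DynamicControllerFinal}, obtaining $u(k+1)-u(k)=\gamma\bigl(e_u(k)+O_{(z,u-u_0)}(T)\bigr)$; propagating this through $e_u(k+1)=Kz(k+1)-\bigl(\alpha(z(k+1))+\beta(z(k+1))u(k+1)\bigr)$ and using smoothness of $\alpha,\beta$ together with $z(k+1)-z(k)=O_{(z,u-u_0)}(T)$ yields $e_u(k+1)=(1-\gamma\beta(k))e_u(k)+r(k)$, where $r(k)$ collects the coupling to $z$, the $KB=O(T)$ feedthrough, and the estimation errors. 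The decisive point is that $\beta$ is positive and bounded above by some $\overline\beta$ on $\mathcal{D}$, so $\gamma<\overline\beta^{-1}$ forces the factor $1-\gamma\beta(k)\in(0,1)$ uniformly: $e_u$ contracts at a rate independent of $T$, while $z$ contracts at rate $O(T)$. This two-time-scale structure is exactly what the frameworks of~\cite{AN04,NT04} are built to handle.

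I would then use the composite Lyapunov function $W(z,e_u)=V_z(z)+\mu\,e_u^2$. Expanding $V_z(z(k+1))$ for the perturbed update and invoking~\eqref{DefineV} gives $\Delta V_z\le-\lambda_z T\|z\|^2+c\,T\|z\|\,|e_u|+O_{(z,u-u_0)^2}(T^2)$, and Young's inequality absorbs the cross term into $-\tfrac{\lambda_z}{2}T\|z\|^2+O(T)e_u^2$. The contraction above gives $\Delta(e_u^2)\le-(1-\overline\eta^2)e_u^2+c\,T^2\|z\|^2+\text{h.o.t.}$, with $\overline\eta=\sup_{\mathcal{D}}(1-\gamma\beta)<1$. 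The one genuinely new manipulation is to remove the $(u-u_0)^2$ hidden inside the $O_{(z,u-u_0)^2}(T^2)$ terms: solving~\eqref{eu} for $u$ shows $u-u_0=\beta(z)^{-1}(Kz-\alpha(z)-e_u)-u_0$ is a smooth function of $(z,e_u)$ vanishing at the origin, so $(u-u_0)^2\le L(\|z\|^2+e_u^2)$ on $\mathcal{D}$ and every error term is dominated by $T^2(\|z\|^2+e_u^2)$. Choosing $\mu$ to balance the $z$–$e_u$ coupling and then $T^\ast$ small enough that the $T^2$ coefficients are negligible against $\tfrac{\lambda_z}{2}T$ and $\mu(1-\overline\eta^2)$ respectively, I obtain $\Delta W\le-c_1T\|z\|^2-c_2e_u^2$ with $c_1,c_2>0$. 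Crucially, because these errors vanish at the origin — precisely why the $O$-notation was set up to carry the factor $\|(z,u-u_0)\|$ — the decrease is negative \emph{definite} rather than merely practical, which upgrades the semiglobal practical stability of~\cite{AN04,NT04} to genuine asymptotic stability, and convergence $z(k)\to0,\ e_u(k)\to0$ then forces $u(k)\to u_0$.

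The remaining steps are standard but indispensable bookkeeping. First, the initial transient of $\rho-1$ steps, during which the fixed inputs $u^\ast_0,\dots,u^\ast_{\rho-2}$ are applied and no estimate is available, is a finite composition of smooth flows and therefore maps $\mathcal{S}$ into a compact set with a bound controllable for $T$ small. Second, invariance is certified by selecting a sublevel set of $W$ that contains the post-transient image of $\mathcal{S}$ and is itself contained in $\mathcal{D}$, so that $\Delta W\le0$ keeps the trajectory inside and all the $O$-constants remain legitimate. I expect the main obstacle to be exactly this self-consistency loop — the $O$-constants are defined relative to $\mathcal{D}$, yet $\mathcal{D}$ must be fixed before boundedness is known — compounded by the two-time-scale balancing of $\mu$ and $T^\ast$; ordering the quantifiers on $\mathcal{S}$, $T^\ast$, $b$, $\mu$, and $\mathcal{D}$ correctly and uniformly over $T\in[0,T^\ast]$ is the delicate part of the argument.
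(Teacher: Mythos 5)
Your proposal is correct and takes essentially the same route as the paper's own proof: the same coupled $(z,e_u)$ discrete-time analysis with contraction factor $1-\gamma\beta(k)$, the same composite Lyapunov function $W=V_z+V_{e_u}$ (the paper simply takes your weight $\mu=1$), the same key decomposition $O_{(z,u-u_0)}(T)=O_z(T)+O_{e_u}(T)$ obtained from Lipschitz continuity of $\beta^{-1}(-\alpha+v(z))$ vanishing appropriately at the origin, and the same post-transient sublevel-set invariance argument resolving the self-consistency of the $O$-constants. The only cosmetic difference is that the paper discharges the inter-sample boundedness and convergence of $x(t)$ by invoking Theorem~1 of~\cite{NTS99} rather than arguing continuity of the flow directly as you do.
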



Although the previous result only claims that trajectories converge to the origin, it can be readily applied to trajectory tracking problems by considering convergence to zero of the error between the real trajectory and the trajectory to be tracked.

Extending these results to MIMO control systems is conceptually simple, with the caveat that $\beta$ and $\gamma$ (now matrices) must be chosen so that the eigenvalues of $I-\beta\gamma$ reside in the unit circle, ensuring convergence of the error $e_u$. An extension to partially feedback linearizable systems is also possible by assuming a well behaved zero dynamics.

We now introduce the following lemma which provides a sufficient condition for the results of Theorem~\ref{Theorem1} to hold under a virtual controller $v$ different from the one provided in \eqref{Controller2}:
\begin{lemma}
\label{Lemma1}
Let the virtual input $v:\R^n \to \R$ be such that the following conditions hold:
\begin{eqnarray}
\label{lemma1req1}
V_z\left(F^a(z(k),\beta^{-1}(k)\left(\alpha(k)+v(z(k))\right)\right)-V_z(z(k))&\leq &- \lambda T\Vert z\Vert^2 + O_{(z,u - u _0)^2}(T^2) \qquad\\
\label{lemma1req2}
v(z(k)+O_{(z,u-u_0)}(T))&=&v(z(k))+O_{(z,u-u_0)}(T),
\end{eqnarray}
where $V_z$ is defined in section \ref{Sec:ControllerDesign}, then the results of Theorem \ref{Theorem1} remain unchanged when using such virtual input in place of the one provided by equation \eqref{Controller2}. 
\end{lemma}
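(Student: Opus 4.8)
The plan is to recognize that Theorem~\ref{Theorem1} is proved in the Appendix using the linear controller~\eqref{Controller2} only through two of its features---the Lyapunov decrease~\eqref{DefineV} and a regularity property of $v$ under $O(T)$ perturbations of its argument---and that these features are exactly the hypotheses~\eqref{lemma1req1} and~\eqref{lemma1req2}. Once this is established, replacing $v(z)=Kz$ by any $v$ satisfying~\eqref{lemma1req1}--\eqref{lemma1req2} leaves every step of the Appendix argument untouched, which is precisely the assertion of the lemma.

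First I would check that the linear controller~\eqref{Controller2} is itself an instance covered by the lemma, so that Theorem~\ref{Theorem1} is recovered as a special case and the lemma is a genuine generalization. Condition~\eqref{lemma1req1} is the Lyapunov decrease inequality~\eqref{DefineV} derived in Section~\ref{Sec:ControllerDesign}. Condition~\eqref{lemma1req2} holds by linearity: writing the perturbation as $\delta=O_{(z,u-u_0)}(T)$, we have $v(z+\delta)=Kz+K\delta=v(z)+K\,O_{(z,u-u_0)}(T)=v(z)+O_{(z,u-u_0)}(T)$, where the final equality applies the rule $g(x)O_x(T)=O_x(T)$ with $g$ the bounded (constant) map $K$. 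This confirms that the two abstract conditions are satisfied by the concrete controller, as they must be.

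The substantive part is to revisit the Appendix proof and confirm that $v$ enters in only two ways. The first is in the Lyapunov analysis of the approximate closed-loop model, where the decrease~\eqref{DefineV} is the engine driving the Nesic--Teel argument; here I would substitute the invocation of~\eqref{DefineV} by the identically structured~\eqref{lemma1req1}. The second is in the error analysis: since the dynamic controller~\eqref{DynamicControllerFinal} evaluates $v$ at the estimate $\widehat{z}=z-e_z$ with $e_z=O_{(z,u-u_0)}(T)$ by~\eqref{StateEstimationError}, the proof must relate $v(\widehat{z})$ to $v(z)$. This is exactly what~\eqref{lemma1req2} supplies, guaranteeing $v(\widehat{z})=v(z)+O_{(z,u-u_0)}(T)$, so that the state-estimation error perturbs the commanded virtual input only at order $T$, which is the order tolerated by the approximate-model framework.

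The main obstacle I anticipate is confirming that these are the \emph{only} uses of $v$. The linear controller carries extra conveniences---global Lipschitz continuity, the linear growth bound $\Vert v(z)\Vert\le\Vert K\Vert\,\Vert z\Vert$, and the explicit form of $A+BK$ used to obtain~\eqref{DefineV}---and I would need to verify that none of these is silently invoked when bounding $\widehat{z}$ and $e_u$ or when closing the loop. In particular, should a growth bound on $v$ be needed for boundedness, I would check whether it is implied by~\eqref{lemma1req1}--\eqref{lemma1req2} over the relevant compact set, or can be absorbed into the $O$-notation bookkeeping, which is in any case defined only over a compact domain $\mathcal{Q}$; interpreting the hypotheses as holding uniformly on the compact set $\mathcal{S}$ of Theorem~\ref{Theorem1} would then reconcile any remaining gap.
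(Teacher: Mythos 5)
Your proposal is correct and takes essentially the same approach as the paper's own proof, which likewise argues that $v$ enters the proof of Theorem~\ref{Theorem1} only through the Lyapunov decrease step (equation~\eqref{Step2}, covered by hypothesis~\eqref{lemma1req1}) and through evaluations of $v$ at $O(T)$-perturbed arguments (equations~\eqref{ControllerFinalError} and~\eqref{Nexteu1}, covered by hypothesis~\eqref{lemma1req2}), all other steps being unaffected. The only detail your enumeration leaves implicit is that~\eqref{lemma1req2} is needed twice---once to compare $v(\widehat{z})$ with $v(z)$, which you state, and once to compare $v(z(k+1))$ with $v(z(k))$ in the error dynamics~\eqref{Nexteu1}---but both are instances of the regularity under $O(T)$ perturbations that you hypothesize, so your plan matches the paper's argument.
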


\subsection{The noisy scenario}

As previously mentioned, in the presence of essentially bounded measurement noise, the state estimation error under the state estimation technique described in Section \ref{Sec:StateEstimation} is now given by:
$$e_z=O_{(z,u - u _0)}(T) + O_{\,\overline{d}\,}(T^{-n}),\quad \overline{d}\stackrel{def.}{=}\text{ess}\sup_{t\in \R_0^+}\Vert d(t)\Vert,$$ 
where $\overline{d}$ is the noise bound and $n$ is the relative degree of the system. This expression shines light on the trade-off between choosing a small sampling time to render the approximate models adequate and choosing a large sampling time to reduce the amplification effect on noise. As with the noise-free case, the proof of the following result can be found in the Appendix.

 

 
\begin{theorem}
\label{Theorem2}
Consider an unknown nonlinear system of the form~\eqref{UnknownSystem1a}-\eqref{UnknownSystem1b} where the output function $h$ has relative degree $n$ and assume the noise $d$ to be essentially bounded, i.e., there exists a constant $\overline{d}\in \R^+_0$ satisfying $\overline{d}=\mathrm{ess}\sup_{t\in \R^+_0}\Vert d(t)\Vert$.  For any compact set $\mathcal{S}\subset \R^n$ of initial conditions containing the origin in its interior there exists a time $T^*\in \R^+$ (depending on $\mathcal{S}$), and constants $b_1,b_2,b_3\in \R^+$ (depending on $\mathcal{S}$ and $T^*$) so that for any sampling time $T\in [0,T^*]$, if $\overline{d}\le b_1$, the dynamic controller \eqref{DynamicControllerFinal}, where the virtual input $v$ is provided by \eqref{Controller2}, using the state estimates provided by an estimation technique satisfying~\eqref{StateEstimationError}, renders the closed-loop trajectories bounded, i.e., $\Vert \widehat{z}(k)\Vert\le b_2$ and $\Vert e_u\Vert\le b_2$ for all $k\in \N$, and $\Vert x(t)\Vert\le b_2$ for all $t\in \R_0^+$. Moreover: 
$$\limsup_{t\to\infty} \Vert x(t)\Vert\le b_3\, \overline{d}\, T^{-n}$$
\end{theorem}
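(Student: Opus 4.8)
The plan is to mirror the noise-free argument behind Theorem~\ref{Theorem1}, which combines the observer framework of Arcak and Nesic~\cite{AN04} with the controller framework of Nesic and Teel~\cite{NT04}, and to track the single new ingredient: the extra, \emph{state-independent} term $O_{\overline{d}}(T^{-n})$ appearing in the estimation error~\eqref{StateEstimateNoise}. As in the noise-free case, I would work in the coordinates $(z,e_u)$, with $e_u$ defined in~\eqref{eu}, noting that for each fixed $z$ the map $u\mapsto e_u$ is an affine bijection, so that on the relevant compact set $\mathcal{D}$ one has $\Vert(z,u-u_0)\Vert^2\le c(\Vert z\Vert^2+e_u^2)$. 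The crucial structural observation, inherited from the noise-free proof, is that the closed loop is a two-timescale system: the update~\eqref{DynamicControllerFinal} contracts $e_u$ by a factor $(1-\gamma\beta)$ that is independent of $T$ (the fast variable), whereas $V_z$ decreases only by $O(T)$ per step (the slow variable).

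First I would recompute the $e_u$-dynamics. Substituting $\widehat{z}=z-e_z$ into~\eqref{DynamicControllerFinal}, using $z_3=\alpha+\beta u$ and the linearity $v=Kz$ from~\eqref{Controller2}, gives $u(k+1)=u(k)+\gamma(e_u(k)+O(e_z))$. Expanding $e_u(k+1)$ and using that the state moves by $O_{(z,u-u_0)}(T)$ per step (so that $v,\alpha,\beta$ change by $O_{(z,u-u_0)}(T)$) yields the recursion $e_u(k+1)=(1-\gamma\beta(k+1))e_u(k)+O_{(z,u-u_0)}(T)+O_{\overline{d}}(T^{-n})$. This is \emph{exactly} the noise-free recursion plus the new floor $O_{\overline{d}}(T^{-n})$, which is the only place noise enters. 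Since $0<\gamma\beta<1$ for $\gamma$ chosen as in the footnote to~\eqref{DynamicControllerFinal}, the quasi-steady-state of the fast variable is $e_u=O_{(z,u-u_0)}(T)+O_{\overline{d}}(T^{-n})$.

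Next I would feed this into the slow dynamics. Writing the exact one-step map via~\eqref{AffineModel}, the identity $\alpha+\beta u=v(z)-e_u$ from~\eqref{eu}, and Proposition~\ref{Prop:BoundTaylor} as $z(k+1)=(A+BK)z(k)-Be_u(k)+O_{(z,u-u_0)}(T^2)$ with $B=O(T)$, the coupling term $-Be_u$ converts the $e_u$-floor into an \emph{effective} forcing of order $O_{\overline{d}}(T^{1-n})$ on the $z$-channel. Building the same composite Lyapunov function $W=V_z+\mu e_u^2$ used in the noise-free proof (with $\mu$ and the weighting chosen, following~\cite{NT04}, to dominate the slow/fast cross-terms via~\eqref{DefineVAgain} and Young's inequality), and absorbing all higher-order-in-$T$ and state-dependent terms exactly as in Theorem~\ref{Theorem1}, I expect to arrive at a discrete ISS-type inequality $W(k+1)\le(1-\kappa T)W(k)+C\,\overline{d}^2\,T^{1-2n}$ valid for $T\le T^*$. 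Iterating gives $\limsup_k W(k)\le (C/\kappa)\,\overline{d}^2\,T^{-2n}$, hence $\limsup_k\Vert z(k)\Vert\le b_3'\,\overline{d}\,T^{-n}$; transferring to continuous time on each sampling interval and using that $\Psi^{-1}$ is Lipschitz near the origin (with $\Psi(0)=0$) converts this into the claimed bound $\limsup_{t\to\infty}\Vert x(t)\Vert\le b_3\,\overline{d}\,T^{-n}$.

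Boundedness and the noise budget $b_1$ would come from forward invariance: on the boundary of a large sublevel set of $W$ the decrease $-\kappa T W$ dominates the forcing $C\overline{d}^2T^{1-2n}$ provided $\overline{d}\le b_1$, keeping every trajectory starting in $\mathcal{S}$ inside the compact set $\mathcal{D}$ on which the estimates of Proposition~\ref{Prop:BoundTaylor} and~\eqref{StateEstimateNoise} are valid; this yields the uniform bound $b_2$. The main obstacle I anticipate is twofold. First, extracting the \emph{sharp} exponent $T^{-n}$ in the ultimate bound: a naive combined Lyapunov estimate that multiplies the fast $e_u$-decay by $T$ loses powers of $T$ and gives a weaker bound, so one must exploit the quasi-steady-state (singular-perturbation) structure to route the noise through the $O(T)$-small coupling $B$. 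Second, the constants must be made mutually consistent: because the amplification $T^{-n}$ worsens as $T\to0$, the noise budget $b_1$, the invariant-set radius $b_2$, and $T^*$ have to be chosen together so that the noise-induced residual never escapes $\mathcal{D}$; this interdependence is the delicate bookkeeping that the Nesic--Teel machinery is designed to organize.
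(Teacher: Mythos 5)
Your proposal is correct in substance and follows the same overall architecture as the paper's proof: handle the initial transient as in Theorem~\ref{Theorem1}, re-derive the input-error recursion with the noise term, and run a composite Lyapunov argument giving invariance of a sublevel set (hence boundedness, under a noise budget $\overline{d}\le b_1$) and an asymptotic gain. Your recursion $e_u(k+1)=(1-\gamma\beta)e_u(k)+O_{(z,u-u_0)}(T)+O_{\overline{d}}(T^{-n})$ is exactly the paper's~\eqref{NexteuError}, and your boundary-of-a-sublevel-set argument is how the paper obtains $b_1$ and $b_2$. Where you genuinely depart from the paper is in extracting the ultimate bound. The paper keeps the unweighted function $W=V_z+e_u^2$, obtains $\Delta W\le-\lambda T\Vert z\Vert^2-\lambda \Vert e_u\Vert^2+M\overline{d}^2T^{-2n}$, and then reads off a limiting ball of radius $r=\overline{d}\,T^{-n}\sqrt{M/\lambda}$; taken literally, that step replaces the guaranteed decrease $-\lambda(T\Vert z\Vert^2+\Vert e_u\Vert^2)$ by $-\lambda\Vert(z,e_u)\Vert^2$, and only certifies decrease outside a region whose $z$-extent is of order $\overline{d}\,T^{-n-1/2}$. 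Your weighted function $W=V_z+\mu e_u^2$ with $\mu=O(T)$, which routes the noise through the $O(T)$ coupling $B$ and yields the ISS inequality $W^+\le(1-\kappa T)W+C\overline{d}^2T^{1-2n}$, recovers the exponent $T^{-n}$ rigorously: the forcing-to-contraction ratio is $\overline{d}^2T^{-2n}$, and $\lambda_{\min}(P_z)\Vert z\Vert^2\le W$ gives the claim. So your refinement is tighter precisely at the point where the paper's argument is loosest; this is a real advantage of your route.

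One caveat: with $\mu\propto T$, a sublevel set of your $W$ constrains $e_u$ only up to $O(T^{-1/2})$, so forward invariance of that set does not by itself deliver the uniform bound $\Vert e_u\Vert\le b_2$ asserted in the theorem. You should either retain the unweighted $W=V_z+e_u^2$ for the boundedness part (as the paper does, through the constant $c_2=\min_{(z,e_u)\in\delta R}\sqrt{T\Vert z\Vert^2+\Vert e_u\Vert^2}$) and reserve the weighted function for the asymptotic gain, or note that the geometric contraction $(1-\gamma\beta)$ in the $e_u$ recursion directly bounds $\vert e_u\vert$ uniformly in $T$ once $z$ is confined to a compact set. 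With that repair, your plan goes through.
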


\section{Experimental evaluation}
\label{Sec:Experimental}
In this section we report on an experimental evaluation of the proposed data-driven controller to regulate the altitude of a quad-copter. The experiments were performed on a Bitcraze Crazyflie 2.1 and an Optitrack Prime 17W motion capture system was used to measure the quad-copter's altitude during the experiments. An experimental demonstration of the robustness of the proposed data-driven controller is available in the video:\\ \texttt{https://www.youtube.com/watch?v=9EVcRvLOGVo}.


\subsection{Experimental  setup}
The Crazyflie 2.1 is a small open source modular quad-copter designed by Bitcraze AB \cite{Crazyflie}
equipped with an IMU based on a 3-axis accelerometer and gyroscope. The baseline firmware for the Crazyflie includes a PID based flight controller. We partitioned this controller into attitude and altitude controllers, keeping the former and replacing the latter with a data-driven controller. 

To provide the data-driven controller with altitude measurements we used eight Optitrack Prime 17W cameras \cite{Optitrack} distributed on three sides along the top of a roughly cubic area. The cameras have a refresh rate of up to $360$Hz and provide position and pose measurements by triangulating a set of markers placed on the quad-copter. Whereas the PID controller regulating attitude receives measurements from the IMU and the motion capture system, the data-driven controller only receives altitude measurements from the motion capture system.

A qualitative view of the measurement noise, when the quad-copter is static on the floor, is presented in Figure \ref{MeasurementNoise}.  The real altitude corresponds to the location of the markers on top of the quad-copter. We observe the noise typically has a magnitude of 1 mm, i.e., $\overline{d}= 0.001$, although there are occasional troughs in the noise signal corresponding to instants where the motion capture system loses track of some of the markers.
\begin{figure}
\centering
\includegraphics[width=0.486\textwidth]{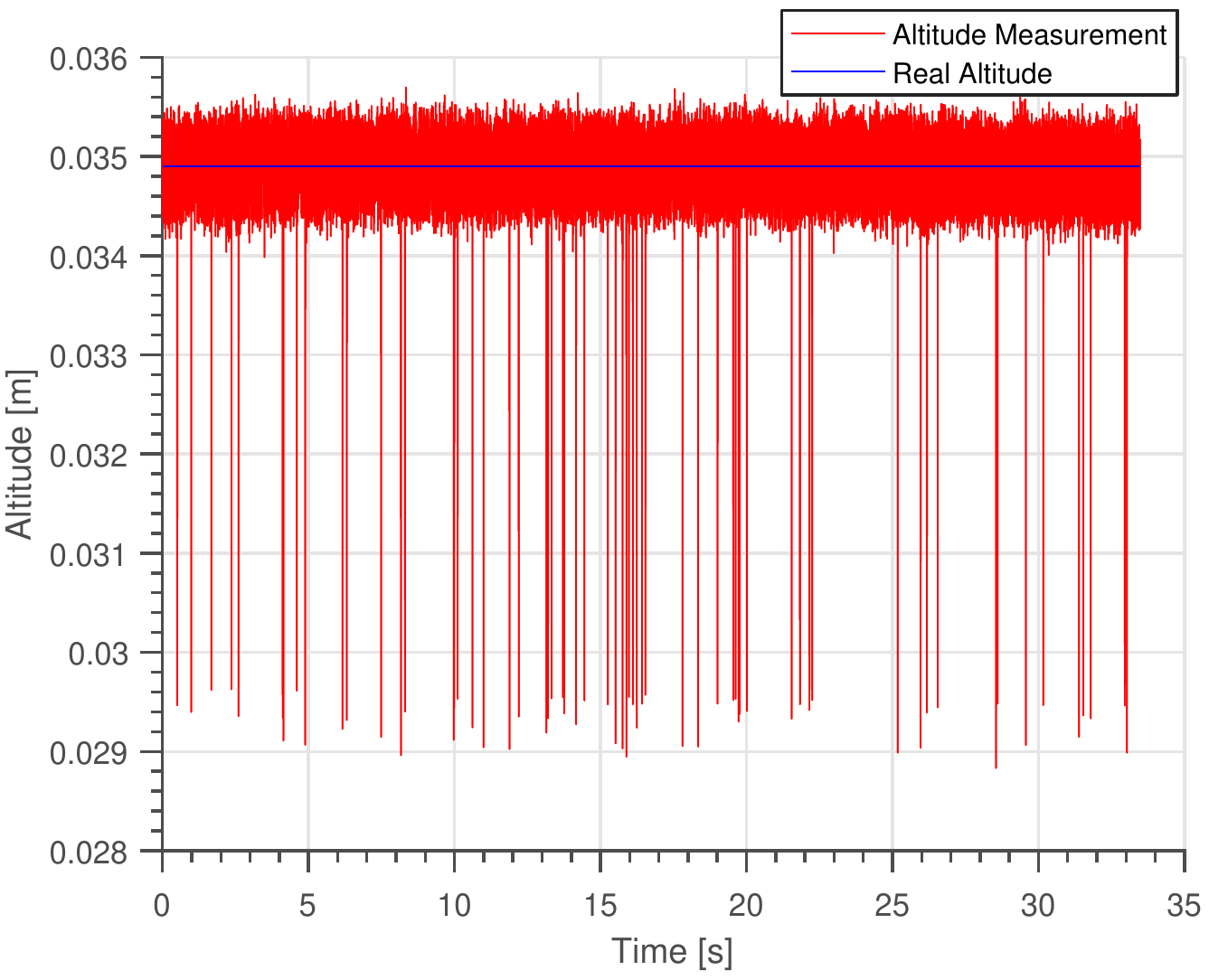}
\caption{Measurement noise while the quad-copter is stationary on the ground.}
\label{MeasurementNoise}
\end{figure}

\subsection{Model}
To obtain a single-input single-output system we kept the PID controller regulating attitude and restricted the quad-copter's motion to a vertical line. Therefore, assuming perfect attitude regulation, the quad-copter's motion can be described by:
\setlength{\arraycolsep}{0.0em}
\begin{eqnarray}
\dot{x}_1&{}={}&x_2\notag\\
\dot{x}_2&{}={}& - \frac{g}{m} + \frac{1}{m}u_{tr},\notag\\
y&{}={}&x_1,\notag
\end{eqnarray}
\setlength{\arraycolsep}{5pt}
where $x_1$ denotes altitude, $g$ is gravity's constant, and the input $u_{tr}$ represents the thrust created by the propellers rotation. The thrust is commanded by a PWM signal\footnote{We only use PWM values up to 90\% so as to leave some control authority for the attitude controller.} and the relation between the commanded PWM signal $u$ and the exerted thrust $u_{tr}$ is well described by the affine map $u_{tr}(u)=\sigma_0 + \sigma_1 u$. The input $u$ in this expression represents the fraction of the maximum allowed thrust, e.g., $u=0.6$ represents $60\%$ of the maximum thrust. This results in the dynamics:
$$\ddot{x}_1=\frac{\sigma_0 - g}{m} + \frac{\sigma_1}{m}u,$$
from which we can infer the relative degree of $y$ to be $2$ with $\alpha(x)=\frac{\sigma_0 - g}{m}$ and $\beta(x)=\frac{\sigma_1}{m}$. Since our results apply to the case where $\alpha$ and $\beta$ are functions, rather than constants, we emulate in software the functions:

\begin{equation}
\label{ExpModel}
\beta(x)=\frac{\sigma_1}{m} -  \frac{x_1^4}{2}, \qquad
\alpha(x)=\frac{\sigma_0 - g}{m} + 2\sin(x_1^2),
\end{equation}
i.e., when the data-driven controller requests the input $u$, we create the input signal $u - \frac{m}{\sigma_1} (x_1^4 u+ 4\sin(x_1^2))$. This effectively turns the control gain into a nonlinear state-dependent function. Given that $\frac{\sigma_1}{m}\approx 18$, our assumption that $\beta$ is greater than zero will be satisfied as long as the drone does not reach altitudes higher than $2.4$ meters.

In conclusion, the drone dynamics take the form:
\setlength{\arraycolsep}{0.0em}
\begin{eqnarray}
\label{DroneNonlinear}
\dot{x}_1&{}={}&x_2\notag\\
\dot{x}_2&{}={}&\frac{\sigma_0 - g}{m}+ 2\sin(x_1^2) + \left(\frac{\sigma_1}{m}-\frac{x_1^4}{2}\right)u,\notag\\
y&{}={}&x_1.
\end{eqnarray}
\setlength{\arraycolsep}{5pt}
\subsection{Data-driven controller and its implementation}
\label{SubSec:ObsAndCtrl}

The quad-copter receives altitude measurements from the motion capture system and uses them for state estimation using~\eqref{StateEstimateFinal} with $\rho=4$. This choice of $\rho$ mitigates the effects of the measurement noise that can be appreciated in Figure~\ref{MeasurementNoise}. The resulting state estimate is then fed to the controller~\eqref{DynamicControllerFinal} where $K=\left[-9\,\,-6\right]$ so as to place both eigenvalues of $A_1 + B_1K$ at $-3$ and $\gamma=0.002$. For the initial transient we use the sequence of inputs $1.0, 1.0, 1.0, 1.0$. The experiments were executed with a sample time of $T=0.0028s$ which corresponds to the maximal rate at which the motion capture system provides data.

\subsection{Experiments}
Figure \ref{Trajectory} shows how the data-driven controller successfully regulates altitude for the non-linear system \eqref{DroneNonlinear}: in the top horizontal panel we can observe the desired set-points displayed in red and the quad-copter's trajectory in blue; the second panel from the top shows that our framework achieves altitude steady-state errors consistently below $5$ millimeters; the bottom two panels portray the values taken by the state-dependent non-linear functions $\alpha(z)$ and $\beta(z)$. A comparison between the input requested by the static controller \eqref{Controller1}-\eqref{Controller2}, assuming knowledge of $\alpha$ and $\beta$, and the input generated by the dynamic controller \eqref{DynamicControllerFinal} is presented in Figure \ref{ExperimentInput}. As expected, we can see the latter converging to the former, made evident in the magnified detail.

The experimental results show that, in spite of measurement errors, the proposed data-driven controller can successfully regulate altitude. In terms of selecting the gains $K$ and $\gamma$, we can intuitively understand reductions in $\gamma$ as leading to both an increase in noise attenuation, through ``averaging'' of the estimation errors in $\widehat{z}_3$ arising from measurement noise, and a reduction to control responsiveness. Given that large gains in $K$ coupled with a small enough parameter $\gamma$ will give rise to oscillations in the system's trajectory, as often seen in systems with input delays, the authors recommend the following heuristic: start the tuning process with low control gains $K$ and a parameter $\gamma$ of the order of $T^{-1}$, judiciously increasing the gain $K$ thereafter until either a satisfactory performance is observed or oscillations arise, the latter meaning an increase in $\gamma$ might be required before continuing to increase $K$.
\begin{figure}[h]
\centering
\includegraphics[width=0.75\textwidth]{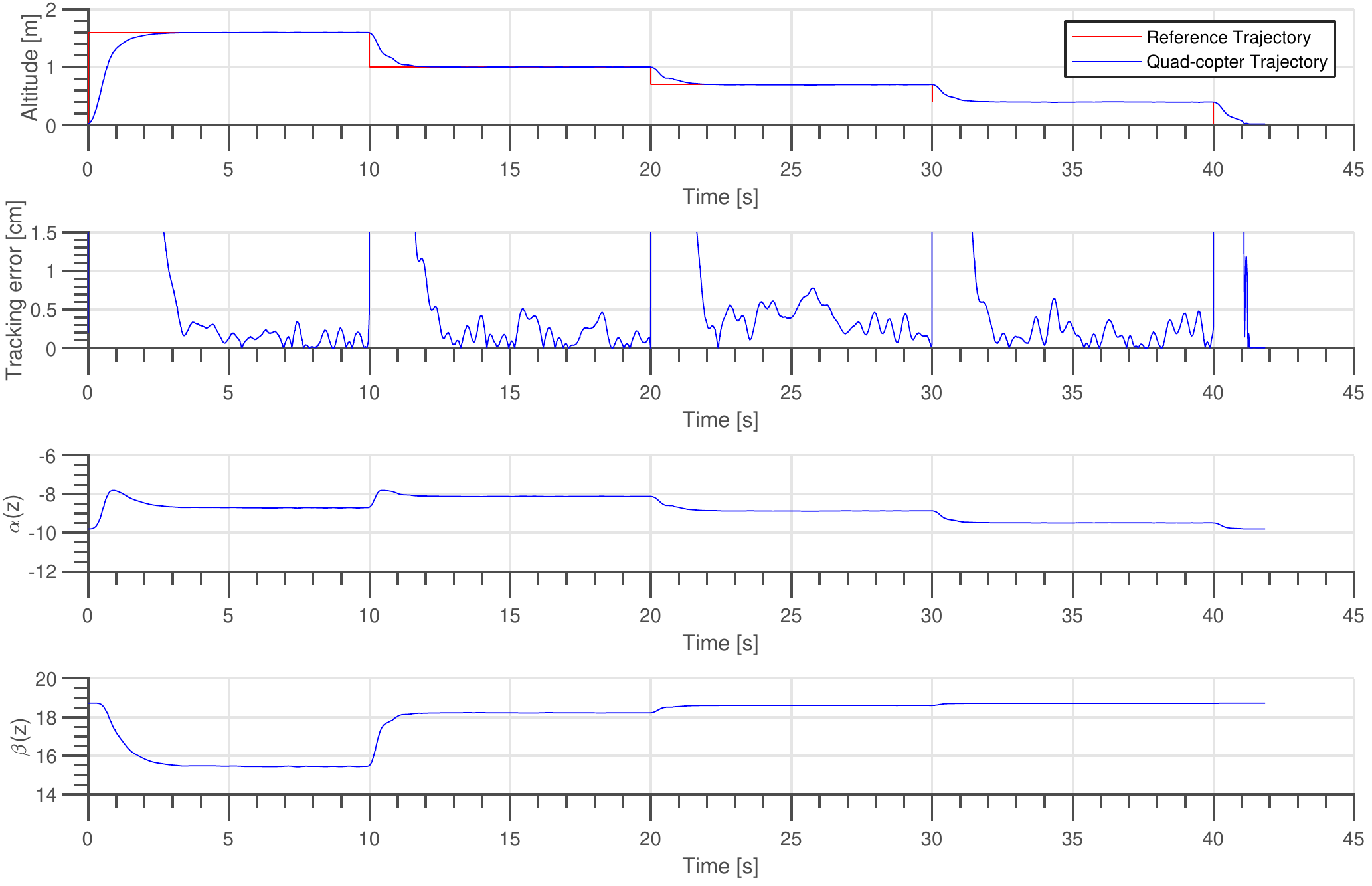}
\caption{Experimental results portraying the quad-copter's trajectory and reference trajectory, tracking error, and values of the state-dependent non-linear functions $\alpha(z)$ and $\beta(z)$.}
\label{Trajectory}
\end{figure}
\newpage
\begin{figure}[h]
\centering
\includegraphics[width=0.75\textwidth]{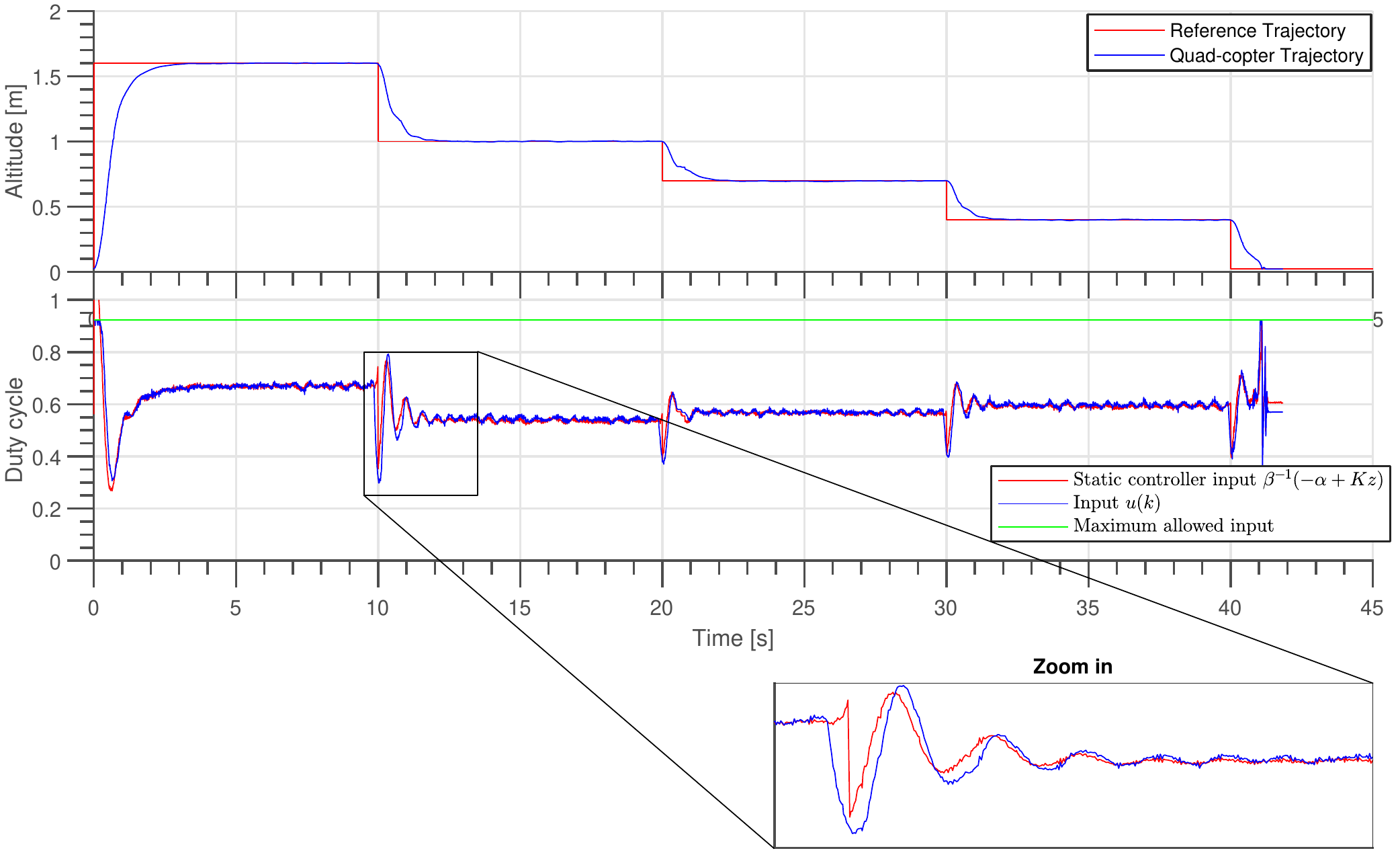}
\caption{Comparison between the input requested by the static controller \eqref{Controller1}-\eqref{Controller2}, assuming knowledge of $\alpha$ and $\beta$, and the input generated by the dynamic controller \eqref{DynamicControllerFinal}, in terms of the PWM's duty cycle. Note that the input is computed on-board the drone and reported, along with the states and parameters, to an external server at a rate of $100$ Hz to avoid draining the microprocessors resources.}
\label{ExperimentInput}
\end{figure}

\section{Conclusions}
There are several important questions that were left unaddressed in this paper. Feedback linearizability was convenient to construct the technical arguments but one can easily see extensions to partially feedback linearizable systems with well behaved zero dynamics. Similarly, extensions to the multi-input multi-output case offer no conceptual difficulties and were already discussed in Section~\ref{Sec:Main}. Identifying the largest class of systems to which the results in this paper (or suitable generalizations thereof) apply is a worthwhile endeavor. 

Equally worthwhile is investigating which state estimation and controller design techniques result in better performance in the context of the proposed data-driven methodology since it would make the results more useful in practical applications. In particular, investigating how to best mitigate the effect of measurement noise would be especially important.

Finally, there were connections made with existing results in high-gain observers, adaptive control, and potentially other areas, as well as with the recent papers~\cite{DeePC,PT20}. All of these deserve to be better understood.

\section*{Appendix}
\label{Sec:Appendix}
\begin{proof}[Proof Theorem~\ref{Theorem1}]
The proof will be based on the feedback linearized form~\eqref{UnknownSystem2a}-\eqref{UnknownSystem2c} of the dynamics rather than the original nonlinear form~\eqref{UnknownSystem1a}-\eqref{UnknownSystem1b}. This results in no loss of generality since both systems are related by the diffeomorphism $\Psi$ that satisfies $\Psi(0)=0$. For simplicity, we will denote the set $\Psi(\mathcal{S})$ simply by $\mathcal{S}$. Since $\Psi$ is a homeomorphism, $\Psi(\mathcal{S})$ is still a compact set.

\textbf{The initial transient:}
the state estimate $\widehat{z}$ requires $\rho$ samples to be collected. To simplify the argument we consider the case where $\rho=3$ which leads to an initial (fixed) sequence of $\rho - 1=2$ inputs $u_0^*,u_1^*$ used at time $k=0$ and $k=1$. This corresponds to an initial transient that must be analyzed separately. 

By applying Proposition~\ref{Prop:BoundTaylor} to the compact set $\mathcal{D}=\mathcal{S}\times\{u^*_0\}$ we conclude the existence of a time $T_0$ so that trajectories are well defined for all $T\in [0,T_0]$ and for all initial conditions in $\mathcal{S}$. We regard $T_0$ as the time elapsed during the first time step under input $u^*_0$. The set of points reached under all these trajectories and for all $T\in [0,T_0]$ is denoted by $Z_0$. We can repeat this argument, using $Z_0$ as the set of initial conditions (and assuming the initial time to be zero) and the input $u_1^*$ to conclude the existence of a time $T_1$ so that trajectories are well defined for all $T\in [0,T_1]$ and for all initial conditions in $Z_0$. By taking $T_2=\min\{T_0,T_1\}$ we conclude that solutions are well defined for the sequence of inputs $u_0^*,u_1^*$ where each input is applied for $T_2$ units of time. Let now $Z$ be the set of points reached under all the trajectories with initial conditions in $\mathcal{S}$ and that result by applying $u_0^*$ for $T$ units of time, followed by applying $u_1^*$ for $T$ units of time with $T$ ranging through all the values in the set $[0,T_2]$. This set will be used several times in the remainder of the proof.

At time step $k=2$ the state estimate is readily available. Let us denote by $E$ the set of possible values taken by the dynamic controller's state $e_u$ at time $k=2$ depending on the different initial conditions and the chosen constants $u_0^*$, $u_1^*$. Since the input error $e_u$ is a continuous function of the initial condition $z(0)$ that belongs to the compact set $\mathcal{S}$ and the constants $u_0^*$, $u_1^*$, $E$ is a bounded set. Consider also the set $R$ defined as the smallest sub-level set of $W=V_z + V_{e_u}$ that contains $Z\times E$ where $V_{e_u}:\R \to \R$ is the Lyapunov function defined by $V_{e_u}(s) = s^2$ and $V_z$ is the Lyapunov function satisfying~\eqref{DefineV}. Our objective is to show that $R$ is an invariant set.

 \textbf{Existence of solutions one step beyond the transient:} we first show that it is possible to continue the solutions from $R$ by employing again Proposition~\ref{Prop:BoundTaylor}. For future use, we define the projections $\pi_1:\R^2\to\R$, $\pi_Z:\R^2\times\R\to \R^2$, and $\pi_E:\R^2\times\R\to \R$  defined by $\pi_1(z_1,z_2)=z_1$, $\pi_Z(z,e_u)=z$, and $\pi_E(z,e_u)=e_u$. The dynamic controller is a function of $\widehat{z}$, however, since $\widehat{z}$ is a function of $z$, we can regard the controller as a smooth function of $z$. We can thus consider the set of inputs $U\subset \R$ defined by all the inputs obtained via our dynamic controller when $u=u_1^*$, $z$ ranges in $\pi_Z(R)$ and $\widehat{z}$ is given by~\eqref{StateEstimateFinal} with $Y(2)$, defined in~\eqref{DefY}, ranging in $\left(\pi_1\circ \pi_Z(R)\right)^{\rho}$, the $\rho$-fold Cartesian product of $\pi_1\circ\pi_Z(R)$. By taking its closure, if needed, we can assume the set $\pi_Z(R)\times U$ to be compact and apply Proposition~\ref{Prop:BoundTaylor} to obtain a time $T_3$ ensuring that solutions starting at $\pi_Z(R)$ exist for all $T\in [0,T_3]$. Moreover, Proposition~\ref{Prop:BoundTaylor} ensures the existence of a constant $M$ for which the bound~\eqref{BoundInProp} holds and, as a consequence, the approximate model~\eqref{ModelOrder33a}-\eqref{ModelOrder33b} is valid for any solution with initial condition in $R$ and any input in $U$. If $T_2<T_3$ we proceed by only considering sampling times in $[0,T_2]$ and note that none of the conclusions reached so far change. If $T_2>T_3$, we can use sampling times in $[0,T_3]$ while noting that all the reached conclusions remain valid by redefining $Z$ to be the set of points reached for any time in $[0,T_3]$ (if the conclusions hold for the (non-strictly) larger $Z$ set they also hold for the (non-strictly) smaller $Z$ obtained by reducing $T_2$ to $T_3$).

\textbf{Invariance of the set $R$:} we can now establish invariance of $R$ by computing \mbox{$W(F_T^e(z,u),G_T^e(e_u))-W(z,e_u)$} for all $(z,e_u)\in R$ with $u$ given by~\eqref{DynamicControllerFinal} and $G_T^e$ denoting the exact dynamics of the input error $e_u$. We will do this in several steps.

In the first step we establish that the evolution of $V_z$ under $F_T^e$ equals the evolution of $V_z$ under $F_T^a$ up to $O(T^2)$ terms. In order to do so, we recall that $F_T^e(z,u)$ can be expressed as: 
\setlength{\arraycolsep}{0.0em}
\begin{eqnarray}
F_T^e(z,u)&{}={}&F_T^a(z,u) + O_{(z,u - u _0)}(T^2)\notag\\
&{}={}&Az + B(\alpha + \beta u) + O_{(z,u - u _0)}(T^2)\notag\\
\end{eqnarray}
\setlength{\arraycolsep}{5pt}
We then have:
\setlength{\arraycolsep}{0.0em}
\begin{eqnarray}
V_z(F_T^e(z,u)) - V_z(z) &{}={}& (F_T^a(z,u) + O_{(z,u - u _0)}(T^2))^TP_z(F_T^a(z,u) + O_{(z,u - u _0)}(T^2))-z^TP_z z\notag\\
&{}={}& V_z(F_T^a(z,u)) - V_z(z) + 2O_{(z,u - u _0)}^T(T^2)P_z  F_T^a(z,u)\notag \\
&&{+}\:O_{(z,u - u _0)}^T(T^2)P_z O_{(z,u - u _0)}(T^2)\notag\\
&\le& V_z(F_T^a(z,u)) - V_z(z) + 2O_{(z,u - u _0)}^T(T^2)P_z  F_T^a(z,u)+O_{(z,u - u _0)^2}(T^4)\notag\\
&{}={}& V_z(F_T^a(z,u)) - V_z(z) + 2O_{(z,u - u _0)}^T(T^2)P_z  Az\notag \\
&&{+}\:2O_{(z,u - u _0)}^T(T^2)P_z B (\alpha + \beta u) + O_{(z,u - u _0)^2}(T^4)\notag\\
&{}={}& V_z(F_T^a(z,u)) - V_z(z) + O_{(z,u - u _0)^2}(T^2) + O_{(z,u - u _0)^2}(T^3)+O_{(z,u - u _0)^2}(T^4)\notag\\
\label{LyapunovIneqV}
&{}={}& V_z(F_T^a(z,u)) - V_z(z) + O_{(z,u - u _0)^2}(T^2),
\end{eqnarray}
\setlength{\arraycolsep}{5pt}
where we used the relationship \mbox{$\Vert z\Vert\le \Vert (z,u - u _0)\Vert$} and boundedness of $\alpha$, $\beta$, and $u$ in virtue of $(z,u)$ belonging to the compact set $\pi_Z(R)\times U$, to obtain the fourth equality.

Noting that from the definition of $e_u$, equation \eqref{eu}, one can reach the expression:
\begin{equation}
\label{ControllerFinalExpression}
u=\beta^{-1}( - \alpha + v(z))-\beta^{-1}e_u,
\end{equation}
we consider the term $\Vert (z,u - u _0)\Vert$ in more detail,
	\setlength{\arraycolsep}{0.0em}
	\label{HandlingInput1}
	\begin{eqnarray}
	\Vert (z,u - u_0)\Vert &{}\leq{}& \Vert z \Vert + \left\Vert \beta^{-1}( - \alpha + v(z))-\beta^{-1}e_u - u_0 \right\Vert\notag\\
	&{}\leq{}&\Vert z \Vert + \left\Vert \beta^{-1}(- \alpha + v(z)) - u_0\right\Vert + \left\Vert \beta^{-1}e_u\right \Vert.
	\end{eqnarray}
	\setlength{\arraycolsep}{5pt}
As the function $\beta^{-1}(z)(- \alpha(z) + v(z))$ is Lipschitz continuous (with Lipschitz constant $L$) on $\pi_Z(R)$, and it produces the value $u_0$ at $z=0$, we conclude that: 
\begin{equation}
\label{HandlingInput2}
\left\Vert\beta^{-1}\left( - \alpha + v(z)\right) - u_0\right\Vert\le L\Vert z\Vert.
\end{equation}
The preceding sequence of inequalities, and boundedness of $\beta^{-1}$ on $\pi_Z(R)$, lead to the useful expression:
\begin{equation}
\label{BoundEstError2}
O_{(z,u - u _0)}(T)=O_z(T) + O_{e_u}(T).
\end{equation}
Combining the previous bounds~\eqref{LyapunovIneqV} and~\eqref{BoundEstError2} we obtain:
\setlength{\arraycolsep}{0.0em}
\begin{align*}
\label{Step1}
V_z(F_T^e(z,u)) - V_z(z){}={} V_z(F_T^a(z,u)) - V_z(z) + O_{z^2}(T^2) + O_{e_u^2}(T^2), \numberthis
\end{align*}
\setlength{\arraycolsep}{5pt}
establishing that the decrease of $V_z$ imposed by $F^e_T$ equals the decrease imposed by $F^a_T$ up to $O(T^2)$ terms.


In the second step we use the definition of $e_u$ in the form given by \eqref{ControllerFinalExpression} to show that $V_z(F_T^a(z,u)) - V_z(z)$ is negative definite up to $O_{z^2}(T^2)$ and $O_{e_u^2}(T)$ terms. Using expression \eqref{ControllerFinalExpression}, the approximate dynamics are given by:
\setlength{\arraycolsep}{0.0em}
\begin{eqnarray}
F^a_T(z,u)&{}={}& Az+Bv(z)-Be_u. \notag
\end{eqnarray}
\setlength{\arraycolsep}{5pt} 
We can now compute $V_z(F_T^a(z,u)) - V_z(z)$ as:
\setlength{\arraycolsep}{0.0em}
\begin{align*}
\label{Step2}
	V_z(F_T^a(z,u))& - V_z(z) \\
	&={} (Az+Bv(z))^TP_z(Az+Bv(z)) - V_z(z) \notag \\
	&\qquad-e_uB^TP_z(Az+Bv(z))-(Az+Bv(z))^TP_zBe_u+B^TP_zBe_u^2\\
	&={} V_z(F_T^a(z,\overline{u})) - V_z(z) \notag \\
	&\qquad-e_uB^TP_z(Az+Bv(z))-(Az+Bv(z))^TP_zBe_u+B^TP_zBe_u^2\\
	&{}\le{} - \lambda_{\min}(Q)T\Vert z\Vert^2 + O_{(z,u - u _0)^2}(T^2)+2T \Vert (A + BK) P_z(B_1+B_2T)\Vert\Vert z\Vert\Vert e_u\Vert\notag \\
	&\qquad+T^2(B_1+TB_2)^TP_z(B_1+TB_2)e_u^2\notag \\
	&{}\le{} - \frac{\lambda_{\min}(Q)}{2}T\Vert z\Vert^2 + c T e_u^2+O_{e_u^2}(T^2)+ O_{(z,u - u _0)^2}(T^2)\notag \\
	&{}\le{} - \lambda_z T\Vert z\Vert^2+O_{z^2}(T^2)+O_{e_u^2}(T),\numberthis
\end{align*}
\setlength{\arraycolsep}{5pt} 
where we reach: the second equality due to \eqref{AffineModel}; the first inequality due to \eqref{DefineV}; the second inequality, which holds for any $c\in\R$ satisfying $c>\frac{2}{\lambda_{\min}(Q)}\norm{(A + BK)P(B_1+B_2T)}^2$, by completing squares; and the last inequality by using equality \eqref{BoundEstError2} and selecting $\lambda_z\in\R^+$ satisfying $\lambda_z\le\frac{\lambda_{\min}(Q)}{2}$.

In the third step we analyze the effect of using the estimates $\widehat{z}$ and $\widehat{z}_3$ when implementing the control law \eqref{DynamicControllerFinal} by substituting $\widehat{z}=z+e_{z_{(1,2)}}$ and $\widehat{z}_3=z_3+e_{z_{(3)}}$, where $e_{z_{(1,2)}}$ represents the vector composed of the first two entries of $e_z$ and $e_{z_{(3)}}$ represents its third entry, and evaluating the dynamics of the error $e_u$. Based on the relation \eqref{StateEstimationError}, the control law \eqref{DynamicControllerFinal} can be expressed as
\begin{align*}
	\label{ControllerFinalError}
	u(k+1)&=u+\gamma \left(v(\widehat{z})-\widehat{z}_3\right)\\
	&=u+\gamma \left(v(z)+Ke_{z_{(1,2)}}-\left(\alpha+\beta u\right)+e_{z_{(3)}}\right)\\
	&=u+\gamma \left(v(z)-\left(\alpha+\beta u\right)\right)+Ke_{z_{(1,2)}}+e_{z_{(3)}}\\
	&=u+\gamma e_u+O_{(z,u-u_0)}(T),\numberthis
\end{align*}
Before going forward, we apply Proposition~\ref{Prop:BoundTaylor} to $(\alpha,\beta)\circ F^e_{T}(z,u)$ to obtain
\begin{equation}
\label{ApproxParameters}
\alpha(T)=\alpha(0) + O_{(z,u - u_0)}(T),\quad \beta(T)=\beta(0) + O_{(z,u - u_0)}(T).
\end{equation}
With these equalities at hand, we compute $G^e_T(e_u)=e_u(k+1)$:
\begin{align*}
\label{Nexteu1}
G^e_T(e_u(k))&=v(z(k+1))-z_3(k+1)\\
	&=v\left(z+O_{(z,u-u_0)}(T)\right)-\left(\alpha (k+1)+\beta (k+1) u(k+1)\right)\\
	&=v(z(k))+KO_{(z,u-u_0)}(T)-\left(\alpha(k)+\left(\beta(k)+O_{(z,u-u_0)}(T)\right)u(k+1)+O_{(z,u-u_0)}(T)\right)\\
	&=v(z(k))+O_{(z,u-u_0)}(T)-\left(\alpha(k)+\beta(k) u(k+1)\right)+u(k+1)O_{(z,u-u_0)}(T)\\
	&=v(z(k))-\left(\alpha(k)+\beta(k) u(k)\right)- \beta(k)\gamma e_u(k)-\beta(k)O_{(z,u-u_0)}(T)\\
	&\quad+\left(u(k)+\gamma e_u(k)+O_{(z,u-u_0)}(T)\right)O_{(z,u-u_0)}(T)+O_{(z,u-u_0)}(T)\\
	&=\left(1 - \beta(k)\gamma \right)e_u(k)+O_{(z,u-u_0)}(T)+O_{(z,u-u_0)^2}(T^2)\\
	&=\left(1 - \beta(k)\gamma \right)e_u(k)+O_{(z,u-u_0)}(T), \numberthis
\end{align*}
where we reach: the second equality by using the exact model \eqref{ModelOrder3a}-\eqref{ModelOrder3b} with $t=T$ aggregating all terms with a $T$ coefficient inside $O_{(z,u-u_0)}(T)$, and the definition of $z_3$ from Section~\ref{Sec:StateEstimation}; the third equality by using \eqref{ApproxParameters}; the fifth equality by substituting $u(k+1)$ with~\eqref{DynamicControllerFinal} and using the definition of $e_u(k)$ from~\eqref{eu}; the sixth equality by absorbing $u$ and $e_u$ into the $O_{(z,u-u_0)}(T)$ term on account of $e_u$ and $u$ being bounded in $R$; and the last equality by noting that $O_{(z,u-u_0)^2}(T^2)=O_{(z,u-u_0)}(T)$ on account of $z$ and $u$ belonging in the compact sets $R$ and $U$.

 Coming back to the Lyapunov function $V_{e_u}(e_u)=e_u^2$, we can now compute $V_{e_u}(G^e_T(e_u))-V_{e_u}(e_u)$:
\begin{align*}
\label{DefineVu}
	V_{e_u}(G^e_T(e_u))-V_{e_u}(e_u)&=\left(\left(1-\beta\gamma\right)e_u+O_{(z,u-u_0)}(T)\right)^2-e_u^2\\
	&=\left(-2\beta\gamma+\beta^2\gamma^2 \right)e_u^2+2\left(1-\beta\gamma\right)e_u O_{(z,u-u_0)}(T)+O_{(z,u-u_0)^2}(T^2)\\
	&\le \left(-\beta\gamma+\beta^2\gamma^2 \right) e_u^2+cO_{(z,u-u_0)^2}(T^2)\\
	&\le -\lambda_u e_u^2+O_{(z,u-u_0)^2}(T^2)\\
	&\le -\lambda_u e_u^2+O_{z^2}(T^2)+O_{e_u^2}(T^2),\numberthis
\end{align*}
where: we reach the first inequality, which holds for any $c\in\R$ satisfying $c>1+\left(1-\beta\gamma\right)^2/\left(\beta\gamma\right)$, by completing squares; the second inequality holds for sufficiently small\footnote{In particular, this inequality holds for any $\gamma$ and $\lambda_u$ satisfying $\gamma\leq \overline{\beta}^{-1}$ and $\lambda_u\leq \left(1-\underline{\beta}\gamma\right)\underline{\beta}\gamma$, where $\overline{\beta}=\max_{z\in R}\beta(z)$ and $\underline{\beta}=\min_{z\in R}\beta(z).$}  $\gamma,\lambda_u \in \R^+$; we reach the last inequality by using equality \eqref{BoundEstError2}.
We now put the three intermediate steps,~\eqref{Step1} and ~\eqref{Step2}, and \eqref{DefineVu} together:
\setlength{\arraycolsep}{0.0em}
\begin{align*}
	W(F_T^e(z,u),G_T^e(e_u)) - W(z,e_u)&\le  - \lambda_z T\Vert z\Vert^2 - \lambda_u\Vert e_u \Vert^2 + O_{z^2}(T^2)+ O_{e_u^2}(T)\notag\\
	&\le  - \lambda_z T\Vert z\Vert^2 - \lambda_u \Vert e_u\Vert^2 + MT^2\Vert z\Vert^2 + M T\Vert e_u \Vert^2.\notag
\end{align*}	
where $M\in\R^+$ is the largest constant stemming from the definition of the $O$ terms.  If we choose $\lambda\in \R^+$ and $T_4\in \R^+$ satisfying:
\setlength{\arraycolsep}{0.0em}
\begin{eqnarray}
\lambda &{}<{}&\min\{\lambda_z,\lambda_{u}\},\notag \\
T_4&{}<{}&\min\left\{\frac{(\lambda_{z}- \lambda)}{M},\frac{(\lambda_{u}- \lambda)}{M}\right\},\notag
\end{eqnarray}
\setlength{\arraycolsep}{5pt}
it follows that for all $T\in [0,T_4]$ we have:
\setlength{\arraycolsep}{0.0em}
\begin{eqnarray}
\label{LastInequality}
W(F_T^e(z,u),G_T^e(e_u)) - W(z,e_u) &\le&  - \lambda T\Vert z\Vert^2 - \lambda \Vert e_u\Vert^2.
\end{eqnarray}
\setlength{\arraycolsep}{5pt}
Therefore, for any $T\in [0,T_5]$, $T_5=\min\{T_1,\hdots,T_4\}$, we have that $R$ remains invariant. By noting that trajectories remain in $R$ for any time in $[0,T_5]$ we conclude that we can apply the same argument to establish that trajectories remain in $R$ for any number of time steps since we only assumed that inputs were generated based on output measurements that remained in $\pi_1\circ\pi_Z(R)$. Compactness of $R$ establishes that trajectories are bounded and thus there exists a constant $b_1\in \R^+$ so that $\Vert e_u(k)\Vert\le b_1$ and $\Vert z(k)\Vert \le b_1$ for all $k\in \N$. Moreover,~\eqref{LastInequality} informs us that both $z$ and $e_u$ will converge to the origin. Invoking Theorem~1 in~\cite{NTS99}, combined with invariance of $R$ and smoothness of the dynamics, we conclude that the solutions of~\eqref{UnknownSystem1a}, when using the dynamic controller~\eqref{DynamicControllerFinal} , where the virtual input $v$ is provided by \eqref{Controller2}, using the state estimates provided by an estimation technique satisfying~\eqref{StateEstimationError}, are bounded, i.e., there exists a constant $b_2\in \R^+$ so that $\Vert x(t)\Vert\le b_2$ and, moreover,  $\lim_{t\to \infty}x(t)=0$. Hence, by taking $b=\max\{b_1,b_2\}$ we conclude the proof.
\end{proof}

\begin{proof}[Proof of Theorem~\ref{Lemma1}]
It is sufficient to verify that the sections of the proof of Theorem~\ref{Theorem1} that depend on $v(z)$ hold under any virtual input satisfying the conditions \ref{lemma1req1}-\ref{lemma1req2}, the rest remains unchanged. An attentive reader will notice that only equations \eqref{Step2}, \eqref{ControllerFinalError} and \eqref{Nexteu1} could be affected by a change in $v(z)$. That being said, it is straightforward to see that if $v(z)$ is such that if condition \eqref{lemma1req1} is satisfied, then \eqref{Step2} remains unchanged, and similarly, that if \eqref{lemma1req2} holds then equation \eqref{ControllerFinalError} and inequality \eqref{Nexteu1} hold. Thus, we conclude that Theorem \ref{Theorem1} holds when replacing $v(z)$ as provided in \eqref{Controller2} with any $v(z)$ satisfying conditions \eqref{lemma1req1} and \eqref{lemma1req2}.
\end{proof}

\begin{proof}[Proof of Theorem~\ref{Theorem2}]
As stated in Section~\ref{Sec:StateEstimation}, in the presence of essentially bounded noise the estimation error is given by $e_z=O_{(z,u - u _0)}(T) + O_{\, \overline{d}\,}(T^{-n})$. This proof follows the same arguments of the proof of Theorem~\ref{Theorem1} while accounting for the effect of measurement noise in $e_z$. Therefore, we shall describe only the required modifications.

Due to measurement noise, we redefine $E$ as the set of possible values taken by the dynamic controller's state $e_u$ at time $k=2$ depending on the different initial conditions,  and the chosen constants $u_0^*$, $u_1^*$. Since the input error $e_u$ is a continuous function of the initial condition $z(0)$ that belongs to the compact set $\mathcal{S}$ and the constants $u_0^*$, $u_1^*$, $E$ is a bounded set. Consider also the set $R$ defined as the smallest sub-level set of $W=V_z + V_{e_u}$ that contains $Z\times E$ where $V_{e_u}:\R \to \R$ is the Lyapunov function defined by $V_{e_u}(s) = s^2$ and $V_z$ is the Lyapunov function satisfying~\eqref{DefineV}. Our objective is to show that $R$ is an invariant set.

 \textbf{Existence of solutions one step beyond the transient:} we first show that it is possible to continue the solutions from $R$ by employing again Proposition~\ref{Prop:BoundTaylor}. For future use, we define the projections $\pi_1:\R^2\to\R$, $\pi_Z:\R^2\times\R\to \R^2$, and $\pi_E:\R^2\times\R\to \R$  defined by $\pi_1(z_1,z_2)=z_1$, $\pi_Z(z,e_u)=z$, and $\pi_E(z,e_u)=e_u$. The dynamic controller is a function of $\widehat{z}$, however, since $\widehat{z}$ is a function of $z$, we can regard the controller as a smooth function of $z$. We can thus consider the set of inputs $U\subset \R$ defined by all the inputs obtained via our dynamic controller when $u=u_1^*$, $z$ ranges in $\pi_Z(R)$ and $\widehat{z}$ is given by~\eqref{StateEstimateFinal} with $Y(2)$, defined in~\eqref{DefY}, ranging in $\left(\pi_1\circ \pi_Z(R)\right)^{\rho}$, the $\rho$-fold Cartesian product of $\pi_1\circ\pi_Z(R)$. By taking its closure, if needed, we can assume the set $\pi_Z(R)\times U$ to be compact and apply Proposition~\ref{Prop:BoundTaylor} to obtain a time $T_3$ ensuring that solutions starting at $\pi_Z(R)$ exist for all $T\in [0,T_3]$. Moreover, Proposition~\ref{Prop:BoundTaylor} ensures the existence of a constant $M$ for which the bound~\eqref{BoundInProp} holds and, as a consequence, the approximate model~\eqref{ModelOrder33a}-\eqref{ModelOrder33b} is valid for any solution with initial condition in $R$ and any input in $U$. If $T_2<T_3$ we proceed by only considering sampling times in $[0,T_2]$ and note that none of the conclusions reached so far change. If $T_2>T_3$, we can use sampling times in $[0,T_3]$ while noting that all the reached conclusions remain valid by redefining $Z$ to be the set of points reached for any time in $[0,T_3]$ (if the conclusions hold for the (non-strictly) larger $Z$ set they also hold for the (non-strictly) smaller $Z$ obtained by reducing $T_2$ to $T_3$).

Due to measurement noise, the subset $U \subset \R$ is now defined by all the inputs obtained via our dynamic controller when $u=u_1^*$, $z$ ranges in $\pi_Z(R)$, $d \in [-\overline{d}, \overline{d}]$ and $\widehat{z}$ is given by~\eqref{StateEstimateFinal} with $Y(2)$, defined in~\eqref{DefY}, ranging in $\left(\pi_1\circ \pi_Z(R)\right)^{\rho}$, the $\rho$-fold Cartesian product of $\pi_1\circ\pi_Z(R)$. Given that $U$ is still compact we can use the same arguments as in Theorem~\ref{Theorem1} to guarantee existence of solutions one step beyond the transient.

We now note that to establish boundedness of all the signals it is sufficient to establish the existence of a sub-level set $R$ of $W$ that is forward invariant and satisfies $\mathcal{S}\subseteq\pi_Z(R)$. As in the previous proof we define $R$ to be the smallest sub-level set of $W=V_z + V_{e_u}$ that contains $Z\times E$ at the end of the initial transient. Given that $R$ is a compact set we can define $c_1=\max_{(z,e_u)\in R}\Vert(z,u-u_0)\Vert$ and $c_2=\min_{(z,e_u) \in \delta R}\sqrt{T\Vert z \Vert^2+\Vert e_u \Vert^2}$, where $\delta R$ is the boundary of the set $R$. Note that $c_2$ is greater than zero as the origin is assumed to be contained in the interior of $\mathcal{S}$ which is itself contained in the interior of $R$. 

Under measurement noise $d$, equality \eqref{ControllerFinalError} becomes:
\begin{equation}
u(k+1)=u(k)+\gamma e_u+O_{(z,u-u_0)}(T)+O_{\overline{d}^m}(T^{-n}).
\end{equation}
This in turn results in $G^e_T(e_u)=e_u(k+1)$ becoming:
\begin{align*}
\label{NexteuError}
	G^e_T(e_u(k))=\left(1 - \beta(k)\gamma \right)e_u(k)+O_{(z,u-u_0)}(T)+O_{\overline{d}}(T^{-n})\numberthis
\end{align*}
Based on this equality, it can be shown that:
\begin{align*}
\label{DefineVuError}
	V&_{e_u}(G^e_T(e_u))-V_{e_u}(e_u)\le -\lambda_u e_u^2+O_{z^2}(T^2)+O_{e_u^2}(T^2)+O_{\overline{d}^2}(T^{-2n})\numberthis
\end{align*}
Inequality \eqref{DefineVuError} allows us to conclude that:
\begin{align*}
	W(F_T^e(z,u),G_T^e(e_u)) - W(z,e_u)	&\le - \lambda T\Vert z\Vert^2 - \lambda\Vert e_u\Vert^2 + M\overline{d}^{2}T^{-2n},\\
	&\le - \lambda c_2^2 + M\overline{d}^{2}T^{-2n}.
\end{align*}	
Thus, $$W(F_T^e(z,u),G_T^e(e_u)) - W(z,e_u)<0$$
holds for all $(z,e_u)\in\delta R$ and $\overline{d}<b_1=\left(\sqrt{\frac{\lambda}{M}}c_2\right)^{\frac{1}{2}}T^n$, showing that $R$ is invariant. This guarantees that all signals remain bounded, i.e., if we define $b_2'$ as the radius of the smallest ball containing $R$ we conclude that $\Vert \widehat{z}(k)\Vert\le b_2'$ and $\left\Vert e_u\right\Vert\le b_2'$ for all $k\in \N$. By using arguments similar to those employed in the proof of Theorem~\ref{Theorem1}, there exists a constant $b_2''$ so that $\Vert x(t)\Vert\le b_2''$ for all $t\in \R$ and we can define $b_2$ to be $\max\{b_2',b_2''\}$.

Moreover, trajectories will converge to the smallest sub-level set of $W$ containing the ball of radius \mbox{$r$} centered at zero where $r$ is the smallest real number satisfying $ - \lambda r^2 + M\overline{d}^{2}T^{-2n}\le 0$, i.e., $r=\overline{d}\, {T}^{ -n}\left(\frac{\lambda}{M}\right)^{-\frac{1}{2}}$. Since said sub-level set is contained in the ball centered at the origin and of radius $r\lambda_{\max}(P_w)/\lambda_{\min}(P_w)$ where $P_w$ is the matrix defining the quadratic Lyapunov function $W(z,e_u)=w^T P_w w$, $w=(z,e_u)$, the result is proved by taking: $$b_3=\left(\frac{\lambda}{M}\right)^{-\frac{1}{2}}\frac{\lambda_{\max}(P_w)}{\lambda_{\min}(P_w)}.$$
\end{proof}

\color{black}

%

\bibliographystyle{IEEEtran}
\bibliography{IEEEabrv,arXivFinal}
\end{document}